\newtheorem{theorem}{Theorem}
\newtheorem{lemma}{Lemma}
\newtheorem{corollary}{Corollary}
\newlength{\halfpagewidth}
\def\ScaleIfNeeded{%
\ifdim\Gin@nat@width>\linewidth \linewidth \else \Gin@nat@width
\fi } \makeatother
\begin{document}
%\pagestyle{fancyplain}
%
%\pagestyle{fancy}
%\lhead[]%
 %   {\footnotesize Physical layer security}
%\cfoot{}

%\title{\Huge{System Analysis of Wireless Power Transfer in Massive MIMO Aided HetNets}}
\title{Wireless Power Transfer in Massive MIMO Aided HetNets with User Association}

\author{Yongxu Zhu, Lifeng Wang~\IEEEmembership{Member,~IEEE,} Kai-Kit Wong,~\IEEEmembership{Fellow,~IEEE,} \\  Shi Jin~\IEEEmembership{Member,~IEEE,}
and Zhongbin Zheng
%\thanks{Manuscript received March 04, 2014; revised June 22, 2014; accepted September 14, 2014.
 %       The editor coordinating the review of this paper and approving it for publication was Prof. T. Charles Clancy. }
\thanks{Y. Zhu, L. Wang, and K.-K. Wong are with the Department of Electronic and Electrical Engineering, University College London (UCL), London, UK(Email: $\rm\{yongxu.zhu.13,lifeng.wang, kai$-$\rm kit.wong\}@ucl.ac.uk$).}
\thanks{S. Jin is with National Mobile Communications Research Laboratory, Southeast University, Nanjing 210096, China (Email: $\rm{jinshi}@seu.edu.cn$).}
\thanks{Z. Zheng is with China Academy of Information and Communications Technology, China (Email: $\rm{ben}@ecit.org.cn$).
}
}

\maketitle

\begin{abstract}
This paper explores the potential of wireless power transfer (WPT) in massive multiple-input multiple-output (MIMO) aided heterogeneous networks (HetNets), where massive MIMO is applied in the macrocells, and users aim to harvest as much energy as possible and reduce the uplink path loss for enhancing their information transfer. By addressing the impact of massive MIMO on the user association, we compare and analyze user association schemes: 1) downlink received signal power (DRSP) based approach for maximizing the harvested energy; and 2) uplink received signal power (URSP) based approach for minimizing the uplink path loss.  We adopt the linear maximal-ratio transmission (MRT) beamforming for massive MIMO power transfer to recharge users. By deriving new statistical properties, we obtain the exact and asymptotic expressions for the average harvested energy. Then we derive the average uplink achievable rate under the harvested energy constraint. Numerical results demonstrate that the use of massive MIMO antennas can improve both the users' harvested energy and uplink achievable rate in the HetNets, however, it has negligible effect on the ambient RF energy harvesting.  Serving more users in the massive MIMO macrocells will  deteriorate the uplink information transfer because of less harvested energy and more uplink interference. Moreover, although DRSP-based user association harvests more energy to provide larger uplink transmit power than the URSP-based one in the massive MIMO HetNets,  URSP-based user association could achieve better performance in the uplink information transmission.
\end{abstract}
\vspace{-1 cm}

\begin{IEEEkeywords}
Energy harvesting, heterogeneous network (HetNet), massive MIMO, user association, wireless power transfer.
\end{IEEEkeywords}

%========================================================================
\section{Introduction}
Traditional energy harvesting sources such as solar, wind, and hydroelectric power highly depend upon time and locations, as well as the conditions of the environments. Wireless power transfer (WPT) in contrast is a much more controllable approach to prolong the lifetime of mobile devices~\cite{Rui_Zhang_2013,Kaibin2014,CheDZ14}. Additionally, the potentially harmful interference received by the energy harvester can actually become a useful energy source. Recently, the potential of harvesting the ambient energy in the fifth-generation (5G) networks has been studied in~\cite{Hossain_WC_Mag_2014,Yi_Liu_2015,dantong-survey}.

Heterogeneous networks (HetNets) are identified as one of the key enablers for 5G, e.g., \cite{Hossain_WC_Mag_2014,AJG14}. In HetNets, small cells are densely deployed~\cite{AJG14,HE15}, which shortens the distances between the mobile devices and the base stations (BSs). {Recently, there is an interesting integration between WPT and HetNets, suggesting that stations, referred to as power beacons (PBs), can be deployed in cellular networks for powering users via WPT \cite{Kaibin2014}. In \cite{Tabassum_2015} and \cite{S_Bi_2015},  the optimal placement of power beacons in the cellular networks has been investigated.
}

Recent attempts have been to understand the feasibility of WPT in cellular networks, device-to-device (D2D) communications and sensor networks. In particular, both picocell BSs and energy towers (or PBs) were considered in \cite{Erol-Kantarci2014} to transfer energy to the users, and their problem was to jointly maximize the received energy and minimize the number of active picocell BSs and PBs. Subsequently in \cite{H_T_WPT_2015}, user selection policies in dedicated RF-powered uplink cellular networks were investigated, where the BSs acted as dedicated power sources. Further, \cite{S_A_H_WPT_2015} studied a $K$-tier uplink cellular network with energy harvesting, where the cellular users harvested the RF energy from the concurrent downlink transmissions in all network tiers. Then \cite{Sakr_AH_2015} studied the D2D scenario in which the cognitive transmitters harvested energy from the interference to support the communication. As mentioned in  \cite{Kaibin_Huang2015}, however, ambient RF energy harvesting is sufficient only for powering low-power sensors with sporadic activities, and dedicated energy source is required for powering mobile devices such as smartphones. As such, \cite{yuanwei_ICC2015} turned the attention to the case, where D2D transmitters harvested energy from the PBs, and proposed several power transfer policies. {In \cite{Xiao_Lu_2015}, battery-free sensor node harvested energy from the access point and ambient RF transmitters based on the power splitting architecture, and the locations of RF transmitters were modeled using Ginibre $\alpha$-DPP.}

On the other hand, {massive multiple-input multiple-output (MIMO) systems, using a large number of antennas at the BSs, achieve ultra-high spectral efficiency by accommodating a large number of users in the same radio channel \cite{Marzetta2010}. For massive MIMO to become reality, there are still some issues such as  high circuit power consumption~\cite{AJG14}, which need to be addressed.}  The exceptional spatial selectivity of massive MIMO means that very sharp signal beams can be formed \cite{ngo2013energy,Lifeng_massiveMIMO} and of great importance to WPT. {Motivated by this, \cite{Xiaoming_chen2013} studied the wireless information and power transfer in a point-to-point (P2P) system including a single-antenna user and its serving BS equipped with large antenna array, where energy efficiency for uplink information transfer was maximized under the quality-of-service (QoS) constraint. {Later in \cite{HengzhiWang_2014}, a receiver with large number of antennas was assumed to harvest energy from a single-antenna transmitter and a single-antenna interferer, and an algorithm was proposed to maximize the data rate while guaranteeing a minimum harvested energy with a large receive antenna array using antenna partitioning.} {In contrast to \cite{Xiaoming_chen2013,HengzhiWang_2014},  \cite{Gang_yang2015} considered the uplink throughput optimization in a single massive MIMO powered cell, where  an access point equipped with a large antenna array transfers energy to multiple users.} {The opportunities and challenges of deploying a massive number of distributed antennas for WPT was discussed in~\cite{Kai_kit_Wong_2015_Mag}.} { In addition, the shorter wavelengths at the mmWave frequencies enable mmWave BSs to pack more antennas for achieving large array gains. Hence recent research works such as~\cite{Robert_heath_mmWave_energy,Lifeng_wang2015_globcom} also studied WPT in mmWave cellular networks. Particularly, in \cite{Robert_heath_mmWave_energy},   the mmWave antenna beam was characterized by using the sectored antenna model and the energy coverage probability was evaluated.  In \cite{Lifeng_wang2015_globcom}, uniform linear array (ULA) with analog beamforming was considered for WPT in mmWave cellular networks.  Different from~\cite{Robert_heath_mmWave_energy,Lifeng_wang2015_globcom}, this paper  focuses on massive MIMO enabled wireless power transfer with digital beamforming in the conventional cellular bands, which will be detailed later.}

Regarded as a promising network architecture to meet the increasing demand for mobile data, massive MIMO empowered HetNets have recently attracted much attention~\cite{BE2013,DBLP2014,NW,XuM15,D_Liu_2015}. {In \cite{BE2013}, downlink beamforming design for minimizing the power consumption was investigated in a single massive MIMO enabled macrocell overlaid with multiple small cells, and it was shown that total power cost can be significantly reduced while satisfying the QoS constraints. Motivated by these research efforts, in this paper, we explore the potential benefits of massive MIMO  HetNets for  wireless information and power transfer (WPT and wireless information transfer (WIT)), which is novel and has not been conducted yet.}

{Different from the aforementioned literature such as \cite{Xiaoming_chen2013,HengzhiWang_2014,Gang_yang2015} where WPT and WIT were only considered in a single cell, we study massive MIMO antennas being harnessed in the macrocells, and employ a stochastic geometry approach to model the $K$-tier HetNets. In particular,  users first harvest energy from downlink WPT, and then use the harvested energy for WIT in the uplink.} {In this scenario, user association determines whether a user is associated with a particular base station for downlink WPT in such networks, and therefore it is crucial to study the effect of user association on WPT. The work of \cite{S_A_H_WPT_2015} considered that users relied on ambient RF energy harvesting, and only studied the effect of user association on uplink information transmission.} {User association in massive MIMO HetNets has been recently investigated for optimizing the throughput~\cite{DBLP2014,NW,XuM15} and energy efficiency~\cite{D_Liu_2015}. The effect of using different user association methods on WPT in such networks is unknown. Hence we examine the effect of user association on the WPT and WIT in massive MIMO HetNets by considering two user association methods: (1) downlink received signal power (DRSP) based for maximum harvested energy, and (2) uplink received signal power (URSP) based for minimum uplink path loss.}  One of our aims is to find out which scheme is better for uplink WIT. In this paper, we have made the following contributions:
\begin{itemize}
\item We develop an analytical framework to examine the implementation of downlink WPT and uplink WIT in massive MIMO aided HetNets with stochastic geometric model. As the intra-tier interference is the source of energy, interference avoidance is not required and maximal-ratio transmission (MRT) beamforming is used for WPT for multiple users in the macrocells.
\item {We investigate the impacts of massive MIMO on the user association of the HetNets, and examine both DRSP-based and URSP-based algorithms by deriving the exact and asymptotic expressions for the probability of a user associated with a macrocell or a small cell in the HetNet.}
\item {We derive the exact and asymptotic expressions for the average harvested energy when users are equipped with large energy storage. We show that the asymptotic expressions can well approximate the exact ones. The implementation of massive MIMO can significantly increase the harvested energy in the HetNets, since it provides larger power gain for users served in the macrocells,  and enables that users with higher received power are offloaded to the small cells.\footnote{Note that power gain is also referred to as array gain in the literature.} In addition, DRSP-based user association scheme outperforms URSP-based in terms of harvested energy, which means that it supports higher user transmit power for uplink information transmission.}
\item {We derive the average uplink achievable rate supported by the harvested energy.}
{Our results demonstrate that the uplink performance is enhanced by increasing the number of antennas at
the macrocell BS, but {serving more users in the macrocells decreases the average achievable rate
because of lower uplink transmit power and more severe uplink interference.} For the case of dense small cells, it can still be interference-limited in the uplink. Furthermore, {although DRSP-based
user association scheme harvests more energy to provide larger uplink transmit power, URSP-based can achieve better WIT performance in the uplink.}} %, since it experiences lower path loss.
\end{itemize}

{The notation of this paper is shown in Table~\ref{tab:gain}.}

\begin{table}\footnotesize
\centering
\caption{Notation}\label{tab:gain}
\begin{tabular}{c|c}
\toprule[1pt]
 \\
$\Phi_\text{M}$, $\lambda_\text{M}$  & Macrocells PPP and density   \\
 \hline
$\Phi_i$, $\lambda_i$  & $i$-th tier PPP and density \\
 \hline
 $T$, $\tau$  &  One block time and time allocation factor \\
 \hline
$N$  & Number of antennas   \\
 \hline
$S$  & Number of single-antenna users served by a MBS\\
 \hline
$P_\text{M}$, $P_i$ & MBS and $i$-th tier transmit power \\
 \hline
$\alpha_\text{M}$, $\alpha_i$ & MBS and $i$-th tier pass loss exponent  \\
 \hline
$G^\text{D}_a$, $G^\text{U}_a$ & Downlink and uplink power gain \\
 \hline
$d$ & Reference distance \\
 \hline
$h$, $g$&  Small-scale fading channel power gain\\
 \hline
 $\Gamma\left(\vartheta,\theta\right)$ & Gamma distribution with shape $\vartheta$ and scale $\theta$\\
  \hline
 $\exp\left(z\right)$ &   Exponential distribution with the parameter $z$\\
 \hline
 $\tilde{\mathcal{U}}_\mathrm{M}$,  $\tilde{\mathcal{U}}_i$  & Interfering users PPP in the MBS tier and the $i$-th tier
\\
 \hline
$P_{u_o}$ & Typical user's transmit power \\
 \hline
${\rm{\mathbf{1}}}\left(\cdot\right)$ & Indicator function \\
 \hline
 $\mathbb{E}\left\{\cdot\right\}$ & Expectation operator\\
\toprule[1pt]
\end{tabular}
\end{table}

\section{Network Description}
This paper considers a $K$-tier time-division duplex (TDD) HetNet including macrocells and small cells such as picocells and relays, etc. {Each user first harvests the energy from its serving BS (as a dedicated RF energy source) in the downlink,} and uses the harvested energy for WIT in the uplink. Let $T$ be the duration of a communication block. The first and second sub-blocks of duration $\tau T$ and $\left(1-\tau\right)T$ are allocated to the downlink WPT and uplink WIT, respectively, where $\tau \left(0\leq \tau \leq 1\right)$ is the time allocation factor. We assume that the first tier represents the class of macrocell BSs (MBSs), each of which is equipped with a large antenna array~\cite{Jungnickel_IEEE_Commag}. The locations of the MBSs are modelled using a homogeneous Poisson point process (HPPP) $\Phi_\mathrm{M}$ with density $\lambda_\mathrm{M}$.
{The locations of the small-cell (such as micro/picocell, femtocell, etc.)  BSs (SBSs) in the $i$-th tier ($i=2,\dots,K$)
are modelled by an independent HPPP $\Phi_i$ with density $\lambda_i$.}
% Also, user locations are modelled by another independent HPPP $\Psi$ with density $\lambda_{\mathcal{U}}$.
{It is assumed that the density of users is much greater than that of BSs so that there always will be one active mobile user at each time slot in every small cell and hence multiple active mobile users in every macrocell.\footnote{In reality, there may be more than one active users in a small cell and this can be dealt with using multiple access techniques.}} In the macrocell, $S$ single-antenna users communicate with an $N$-antenna MBS (assuming $N \gg S \geq 1$) in the uplink over the same time slot and frequency band.{\footnote{{We note that in~\cite{Sakr_AH_2015}, the probability mass function of the number of users served by a generic BS was derived by approximating the area of a Voronoi cell via a gamma-distributed random variable. However, the result in~\cite{Sakr_AH_2015} cannot be applied in this paper,  since the Euclidean plane is not divided into Voronoi cells based on the considered user association methods. We highlight that it is an important work to study the case of the dynamic $S$ following a certain distribution in less-dense scenarios.} }}  In the small cell, only one single-antenna user is allowed to communicate with a single-antenna SBS at a time slot. We assume that perfect channel state information (CSI) is known at the BS,\footnote{In the practical TDD massive MIMO systems, the downlink CSI can be obtained through channel reciprocity based on uplink training~\cite{Marzetta_2010_Nonc}.} and {the effect of pilot contamination on channel estimation is omitted. As mentioned in~\cite{AJG14,Hoon2012},  pilot contamination is a relatively
secondary factor for all but colossal numbers of antennas, and various methods to mitigate
pilot contamination via low-intensity base station coordination have already been
proposed in the literature such as \cite{YinHai_fan_2013}.}  In addition, universal frequency reuse is employed such that all of the tiers share the same bandwidth and all the channels are assumed to undergo independent identically distributed (i.i.d.) quasi-static Rayleigh block fading.

\subsection{User Association}
{We introduce two user association algorithms: (1) a user is associated with the BS based on the maximum DRSP at the user, which results in the largest average received power; and (2) a user is associated with the BS based on the maximum URSP at the BS, which will minimize the power loss of user's signal during the propagation.}\footnote{{Although user association for the downlink and uplink can be decoupled to maximize both the DRSP and URSP, the main drawback for the decoupled access is that channel reciprocity in massive MIMO systems will be lost~\cite{Federico_Boccardi2015}.}}

Considering the effect of massive MIMO, the average received power at a user that is connected with the $\ell$-th MBS ($\ell \in \Phi_\mathrm{M}$) can be expressed as
\begin{align}\label{Macro_Receive_Power}
{P_{r,\ell}} =G_a^{\mathrm{D}} \frac{P_\mathrm{M}}{S}L\left(\left|X_{\ell,\mathrm{M}}\right|\right),
\end{align}
where $G_a^{\mathrm{D}}$ denotes the power gain obtained by the user associated with the MBS, $P_\mathrm{M}$ is the MBS's transmit power, $L\left(\left|X_{\ell,\mathrm{M}}\right|\right)=\beta{ {{\left|X_{\ell,\mathrm{M}}\right|}}^{ - {\alpha_\mathrm{M}}}}$ is
the path loss function, $\beta$ is the frequency dependent constant value, $\left|X_{\ell,\mathrm{M}}\right|$ denotes the distance, and $\alpha_\mathrm{M}$ is the path loss exponent. In the small cell, the average received power at a user that is connected with the $j$-th SBS ($j \in \Phi_i$) in the $i$-th tier is expressed as
\begin{align}\label{Small_Receive_Power}
{P_{r,i}} ={P_i}L\left(\left|X_{j,i}\right|\right),
\end{align}
where ${P_i}$ denotes the SBS's transmit power in the $i$-th tier and as above $L\left(\left|X_{j,i}\right|\right)=\beta{\left( {{\left|X_{j,i}\right|}} \right)^{ - {\alpha_i}}}$ is the path loss function with distance $\left|X_{j,i}\right|$ and path loss exponent $\alpha_i$.

For DRSP-based user association, the aim is to maximize the average received power. Thus, the serving BS for a typical user is selected according to the following criterion:
\begin{align}\label{DRSP_UA}
\mathrm{BS}: \arg \max_{k \in \left\{ {\mathrm{M},2, \dots ,K} \right\}} P^*_{r,k},
\end{align}
where
\begin{equation}
P^*_{r,\mathrm{M}}= \mathop{ \max} \limits_{\ell\in \Phi_\mathrm{M}} P_{r,\ell},~\mbox{and}~
P^*_{r,i}= \mathop{ \max} \limits_{j\in \Phi_i} P_{r,i}.
\end{equation}

By contrast, for URSP-based user association, the objective is to minimize the uplink path loss, and as such, the serving BS for a typical user is selected by
\begin{align}\label{URSP_UA}
\mathrm{BS}:  \arg \max_{k \in \left\{ {\mathrm{M},2, \ldots ,K} \right\}} L^*\left(\left| X_k\right| \right),
\end{align}
where
\begin{align}
L^*\left(\left|X_\mathrm{M}\right|\right)&= G_a^{\mathrm{U}} \max_{\ell\in \Phi_\mathrm{M}} L\left(\left|X_{\ell,\mathrm{M}}\right|\right),\\
L^*\left( \left|X_i\right| \right)&=\max_{j\in \Phi_i} L (\left|X_{j,i}\right|).
\end{align}
Here, $G_a^{\mathrm{U}}$ is the power gain of the serving MBS and $L^*\left(\left|X_\mathrm{M}\right|\right)$ can be viewed as compensated path loss due to the power gain.

\subsection{Downlink WPT Model}
{For wireless energy harvesting, the RF signals are interpreted as energy. Therefore, in the massive MIMO macrocell, we adopt the simplest linear MRT beamforming\footnote{{Since there is no interference concern in the downlink power transfer, other beamforming methods involving interference mitigation such as zero-forcing (ZF) will reduce power gain and increase the power consumption of the MBS.}} to direct the RF energy towards its $S$ intended users with equal-time sharing.\footnote{{In this way, user receives the largest transferred power in a short time, which means that the user's battery can be quickly recharged.}}  This suboptimal approach also helps with the analytical tractability. Thus, for each intended user of the macrocell at a communication block time $T$, the directed power transfer time is $\frac{\tau T}{S}$, the isotropic power transfer time is $\frac{(S-1)\tau T}{S}$, and the ambient RF energy from nearby BSs is harvested during the whole energy harvesting time $\tau T$. } We use the short-range propagation  model~\cite{Baccelli2006_gen,Kaibin2014} to avoid singularity caused by proximity between the BSs and the users, {which guarantees that the random distance between user and BS  is larger than a fixed reference distance, and such constraint is also considered in the 3GPP channel model~\cite{3GPP_channel_model}}. This will ensure that users receive finite average power.
{We assume that the RF energy harvesting sensitivity level is very small (e.g. -10 dBm~\cite{Rui_Zhang_2013}) and can be omitted~\cite{Rui_Zhang_2013,Kaibin2014,S_A_H_WPT_2015}. In fact, this paper considers users with large energy storages (which will be specified in the following section) such that  enough harvested energy can be stored for supporting stable transmit power, which implies that the small level of the minimum incident
energy has negligible contribution on the amount of  harvested energy.} {As the energy harvested from the noise is negligible, during the energy harvesting phase, the total harvested energy   at a typical user $o$ that is associated with the MBS is given by
\begin{figure}
    \begin{center}
\includegraphics[height=2.8in,width=4.5in]{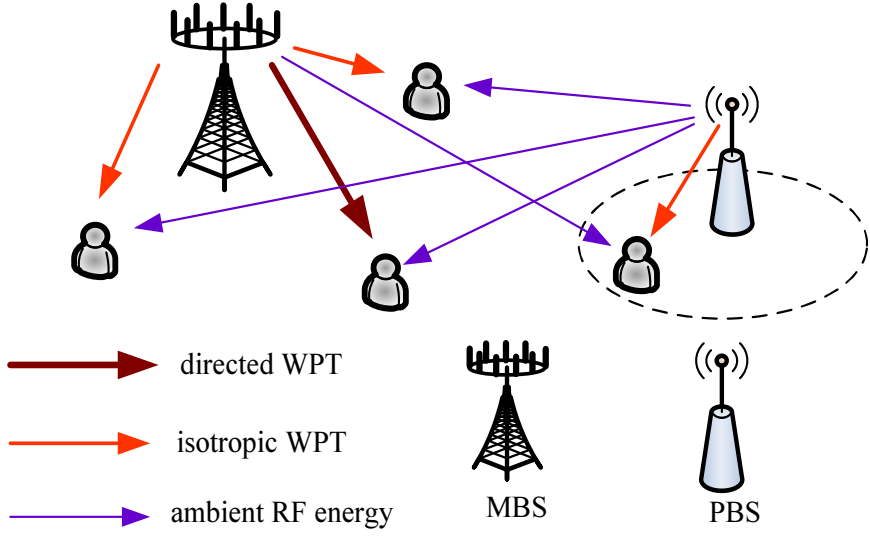}
        \caption{{An illustration of wireless power transfer in the two-tier HetNet consisting of massive MIMO MBS and picocell base station (PBS).}}
        \label{system_model}
    \end{center}
\end{figure}
\begin{multline}\label{MBS_RE_Power}
\mathrm{E}_{o,\mathrm{M}}=\underbrace{\eta{{{P_\mathrm{M}}}}{h_o}L\left( \max\left\{{\left| {{X_{{o},\mathrm{M}}}} \right|},d\right\}
 \right)\times\frac{\tau T}{S}}_{\mathrm{E}_{o,\mathrm{M}}^1}\\
+\underbrace{\eta{{{P_\mathrm{M}}}}{h'_o}L\left( \max\left\{{\left| {{X_{{o},\mathrm{M}}}} \right|},d\right\}
 \right)\times\frac{\left(S-1\right)\tau T}{S}}_{\mathrm{E}_{o,\mathrm{M}}^2}\\
+\underbrace{\eta \left( I_{\mathrm{M},1}\times\tau T+I_{\mathrm{S},1}\times\tau T\right)}_{\mathrm{E}_{o,\mathrm{M}}^3},
\end{multline}}
where $\mathrm{E}_{o,\mathrm{M}}^1$ is the energy from the directed WPT, $\mathrm{E}_{o,\mathrm{M}}^2$ is the energy from the isotropic WPT, and  $\mathrm{E}_{o,\mathrm{M}}^3$ is the energy from the ambient RF, {as illustrated in Fig.~\ref{system_model}}. Here, $0< \eta < 1$ is the RF-to-DC conversion efficiency, $d>0$ denotes the reference distance, $h_o\sim \Gamma\left(N,1\right)$ and $\left| {{X_{{o},\mathrm{M}}}} \right|$ are, respectively, the small-scale fading channel power gain and the distance when the serving MBS recharges the typical user, and ${h'_o}\sim \rm{exp}(1)$ is the small-scale fading channel power gain when the serving MBS directly transfers energy to other users in the same cell. In addition,
{\begin{equation}
I_{\mathrm{M},1}=\sum\limits_{\ell  \in {\Phi_\mathrm{M}}\setminus\left\{ o \right\}} {{P_\mathrm{M}}{h_\ell }L\left(\max \left\{ {\left| {{X_{\ell ,\mathrm{M}}}} \right|},d\right\}\right)}
\end{equation}
is the sum of interference from the interfering MBSs in the first tier, where $h_\ell \sim \Gamma\left(1,1\right)$ }and $\left| {{X_{\ell ,\mathrm{M}}}} \right|$ denote, respectively, the small-scale fading interfering channel gain and the distance between a typical user and MBS $\ell \in {\Phi_\mathrm{M} \setminus\left\{ o \right\}}$ ({except the typical user's serving MBS}), and
\begin{equation}
I_{\mathrm{S},1}=\sum\limits_{i = 2}^K {\sum\limits_{j \in {\Phi _i}} {{P_i}{h_j}L\left( \max \left\{ {\left| {{X_{j,i}}} \right|},d\right\}\right)} }
\end{equation}
is the sum of interference from the SBSs in the first tier, where $h_j \sim \rm{exp}(1)$ and $\left| {{X_{j,i}}} \right|$ are, respectively, the small-scale fading interfering channel power gain and the distance between a typical user and SBS $j  \in {\Phi_i}$. In each power transfer phase, the harvested energy at a typical user $o$ associated with the SBS in the $k$-th tier can also be written as
\begin{align}\label{SBS_RE_Power}
\mathrm{E}_{o,k}=\underbrace{\eta{{{P_k}}}{g_o}L\left( \max\left\{{\left| {{X_{{o},k}}} \right|},d\right\}
 \right)\times\tau T}_{\mathrm{E}_{o,k}^1}
 +\underbrace{\eta \left(I_{\mathrm{M},k}+I_{\mathrm{S},k}\right)\times\tau T}_{\mathrm{E}_{o,k}^2},
\end{align}
where $\mathrm{E}_{o,k}^1$ is the energy from the isotropic WPT and $\mathrm{E}_{o,k}^2$ is the energy from the ambient RF, $g_o\sim \Gamma\left(1,1\right)$ and $\left| {{X_{{o},k}}} \right|$ are the small-scale fading channel power gain and the distance between a typical user and its associated MBS, respectively, and similar to the above, we also have
{\begin{equation}
I_{\mathrm{M},k}=\sum\limits_{\ell  \in {\Phi_\mathrm{M}}} {{P_\mathrm{M}}{g_\ell }L\left(\max \left\{ {\left| {{X_{\ell,\mathrm{M}}}} \right|},d\right\}\right)},
\end{equation}
in which $g_\ell \sim \Gamma\left(1,1\right)$ } and ${\left| {{X_{\ell,\mathrm{M}}}} \right|}$ are, respectively, the small-scale fading interfering channel power gain and the distance between a typical user and MBS $\ell$,  and
\begin{equation}
I_{\mathrm{S},k}=\sum\limits_{i = 2}^K {\sum\limits_{j \in {\Phi _i}\setminus \left\{o\right\}} {{P_i}{g_{j,i}}L\left( \max \left\{ {\left| {{X_{j,i}}} \right|},d\right\}\right)}},
\end{equation}
in which $g_{j,i}\sim \Gamma\left(1,1\right)$ and $\left| {{X_{j,i}}} \right|$ are, respectively, the small-scale fading interfering channel power gain and the distance between a typical user and SBS $j  \in {\Phi_i}\setminus\left\{o\right\}$.

\subsection{Uplink WIT Model}
 After energy harvesting, user $u_i$ transmits information signals to the serving BS with a specific transmit power $P_{u_i}$. In the uplink, each MBS uses linear zero-forcing beamforming (ZFBF) to simultaneously receive $S$ data streams from its  $S$ intended users to cancel the intra-cell interference, which has been widely used in the massive MIMO literature \cite{Hoon2012,Hosseini2014_Massive}.

%The array gain $G_a$ of ZFBF transmission
% is  $\frac{{N - S + 1}}{S}$~\cite{Hoon2012,Dilip2014,Hosseini2014_Massive}. It is implied by \eqref{Macro_Receive_Power}
% that the large array gain brought by massive MIMO can have a big impact on tier selection and cell load.

For a typical user that is associated with its typical serving MBS, the received signal-to-interference-plus-noise ratio (SINR) at its typical serving MBS is given by
%\begin{align}\label{SINR_Macro}
%\mathrm{SINR}_\mathrm{M}= \frac{{{{P_{u_o}}} {h_{o}} L\left( \max\left\{\left|X_o\right|,d\right\}\right)}}{\sum\limits_{i \in \tilde{\mathcal{U}} \setminus \left\{ o \right\}} {{P_{{u_i}}}{h_i}L\left( {\max \left\{ {\left| {{X_i}} \right|,d} \right\}} \right)} + \delta ^2},
%\end{align}
\begin{equation}\label{SINR_Macro}
{\rm SINR}_\mathrm{M} = \frac{{{P_{{u_o}}}{h_{o,\mathrm{M}}}L\left( \max \left\{\left| {{X_{o,\mathrm{M}}}} \right|,d\right\} \right)}}{{{I_{u,\mathrm{M}}} + {I_{u,\mathrm{S}}} + {\delta ^2}}},
\end{equation}
where
\begin{equation}
\left\{\begin{aligned}
{I_{u,\mathrm{M}}} &= \sum\limits_{i \in {{\widetilde {\mathcal{U}}}_\mathrm{M}} \setminus \left\{ o \right\}} {{P_{{u_i}}}{h_i}L\left( {\max \left\{ {\left| {{X_i}} \right|,d} \right\}} \right)},\\
{I_{u,\mathrm{S}}}&=\sum\limits_{i = 2}^K {\sum\limits_{j \in {{\widetilde {\mathcal{U}}}_i}} {{P_{{u_j}}}{h_j}L\left( {\max \left\{ {\left| {{X_j}} \right|,d} \right\}} \right)} },
\end{aligned}\right.
\end{equation}
 ${h_{o,\mathrm{M}}}\sim \Gamma\left(N-S+1,1\right)$ \cite{Hosseini2014_Massive} and $\left|X_{o,\mathrm{M}}\right|$ are the small-scale fading channel power gain and the distance between a typical user and its typical serving MBS, respectively, $h_i\sim \rm{exp}(1)$ and $\left|X_i\right|$ are the small-scale fading interfering channel power gain and the distance between the interfering user $u_i$ and the typical serving MBS, respectively, $\tilde{\mathcal{U}}_\mathrm{M}$ is the point process corresponding to the interfering users in the macrocells, while $\tilde{\mathcal{U}}_i$ is the point process corresponding to the interfering users in the $i$-th tier, and $\delta ^2$ denotes the noise power.

Likewise, for a typical user associated with the typical serving SBS in the $k$-th tier, the received SINR is given by
\begin{equation}\label{SINR_Small}
{\rm SINR}_k= \frac{{{P_{{u_o}}}{g_{o,k}}L\left( \max\left\{\left| {{X_{o,k}}} \right|,d\right\} \right)}}{{{I_{u,\mathrm{M}}} + {I_{u,\mathrm{S}}} + {\delta ^2}}},
\end{equation}
where
\begin{equation}
\left\{\begin{aligned}
{I_{u,\mathrm{M}}} &= \sum\limits_{i \in {{\widetilde {\mathcal{U}}}_\mathrm{M}} } {{P_{{u_i}}}{g_i}L\left( {\max \left\{ {\left| {{X_i}} \right|,d} \right\}} \right)},\\
{I_{u,\mathrm{S}}}&=\sum\limits_{i = 2}^K {\sum\limits_{j \in {{\widetilde {\mathcal{U}}}_i}{ \setminus \left\{ o \right\}}} {{P_{{u_j}}}{g_j}L\left( {\max \left\{ {\left| {{X_j}} \right|,d} \right\}} \right)} },
\end{aligned}\right.
\end{equation}
$g_{o,k}\sim \mathrm{exp}(1)$ and $\left| {{X_{o}}} \right|$ are the small-scale fading channel gain and the distance between a typical user and its typical serving SBS, respectively, $g_{i}\sim \mathrm{exp}(1)$ and $\left| {{X_{i}}} \right|$ are the small-scale fading interfering channel gain and the distance between the interfering user $u_i$ and the typical serving BS, respectively.

\section{Energy Analysis}\label{section_WPT}
Here, the average harvested energy is derived assuming that users are equipped with large energy storage so that users can transmit reliably after energy harvesting. Considering the fact that the energy consumed for uplink information transmission should not exceed the harvested energy, the stable transmit power $P_{u_o}$ for a typical user should satisfy~\cite{Kaibin2014}
\begin{align}\label{power_constr_1}
P_{u_o} \leq  \frac{\overline{\mathrm{E}}_{o}}{\left(1-\tau\right) T},
\end{align}
where  $\overline{\mathrm{E}}_{o}$ denotes the average harvested energy.

\subsection{{New Statistical Properties}}
Before deriving the average harvested energy, we find the following lemmas useful.

\begin{lemma}
Under DRSP-based user association, the probability density functions (PDFs) of the distance $ \left|X_{o,\mathrm{M}}\right| $ between a typical user and its serving MBS and the distance $ \left|X_{o,k}\right| $ between a typical user and its serving SBS in the $ k $-th tier are, respectively, given by
\begin{align}\label{DRSP_PDF_M}
f_{\left| {{X_{o,{\mathrm{M}}}}} \right|}^\mathrm{DRSP}(x) = \frac{{2\pi {\lambda _{\mathrm{M}}}}x }{{{\Psi _{\mathrm{M}}^ \mathrm{DRSP}}}}\exp\left(  - \pi {\lambda _{\mathrm{M}}}{x^2} - \pi \sum\limits_{i = 2}^K {{\lambda _i}{{\hat r}_{{\mathrm{M}}\mathrm{S}}}^2 {x^{\frac{2{\alpha _{\mathrm{M}}}}{\alpha _i}}}} \right),
\end{align}
and
\begin{align}\label{Lemma-f-down-S}
f_{\left| {{X_{o,k}}} \right|}^\mathrm{DRSP}(y) =\frac{{2\pi \lambda _{k}}y}{{\Psi _k^\mathrm{DRSP}}}\times
\exp\left(  - \pi {\lambda _{\mathrm{M}}}{{\hat r}_{\mathrm{SM}}^2}{{y^{\frac{2\alpha _k}{\alpha _{\mathrm{M}}}}}} - \pi \sum\limits_{i = 2}^K {{\lambda _i}{{\hat r}_{\mathrm{SS}}^2}{y^{\frac{2\alpha _k}{\alpha _i}}}} \right),
\end{align}
in which ${\hat r}_{{\mathrm{M}}\mathrm{S}} = {\left( {G_a^{\mathrm{D}}\frac{{{P_{\mathrm{M}}}}}{{S{P_i}}}} \right)^{ \frac{- 1}{\alpha _i}}}$ with $G_a^{\mathrm{D}}=(N + S - 1)$, ${\hat r_{\mathrm{S}{\mathrm{M}}}} = {\left( {\frac{{{SP_k}}}{{G_a^{\mathrm{D}}{P_{\mathrm{M}}}}}} \right)^{ \frac{- 1}{\alpha _\mathrm{M}}}}$, and ${{\hat r}_{\mathrm{SS}}}= {\left( {\frac{{{P_k}}}{{{P_i}}}} \right)^{ \frac{- 1}{\alpha _i}}}$. Also, in (\ref{DRSP_PDF_M}), $ \Psi_\mathrm{M}^ \mathrm{DRSP} $ is the probability that a typical user is associated with the MBS, given by
\begin{align}\label{M_Pr}
\Psi_\mathrm{M}^ \mathrm{DRSP}=2 \pi  \lambda_{\mathrm{M}}\times
 \int_0^\infty  r \exp\left(- \pi \lambda_\mathrm{M} r^2-\pi \sum\limits_{i = 2}^K {\lambda_i {{\hat r}_{\mathrm{M}\mathrm{S}}^2}{r^{\frac{2{\alpha _{\mathrm{M}}}}{\alpha _i}}}}  \right)dr,
\end{align}
and $ \Psi_k^ \mathrm{DRSP} $ is the probability that a typical user is associated with the SBS in the $k$-th tier, which is given by
\begin{align}\label{DRSP_k_prob}
\hspace{-0.3 cm} \Psi_k^\mathrm{DRSP} = 2\pi {\lambda _k}\times
\int_0^\infty r ~{\exp} \left(  - \pi {\lambda _{\mathrm{M}}}{{\hat r}_{\mathrm{SM}}^2}{{r^{\frac{2\alpha _k}{\alpha _{\mathrm{M}}}}}} - \pi \sum\limits_{i = 2}^K {{\lambda _i}{{\hat r}_{\mathrm{SS}}^2}{r^{\frac{2\alpha _k}{\alpha _i}}}} \right)dr.
\end{align}
\end{lemma}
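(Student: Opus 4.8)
The plan is to exploit the monotonicity of the path-loss function together with the contact-distance (void-probability) statistics of a homogeneous PPP, reducing the DRSP rule \eqref{DRSP_UA} to a set of pairwise comparisons between the nearest base station of each tier. First I would observe that since $L(r)=\beta r^{-\alpha}$ is strictly decreasing, within any tier the base station delivering the largest average received power is the geometrically nearest one; hence $P^*_{r,\mathrm{M}}$ is produced by the nearest MBS, at distance $R_\mathrm{M}$, and $P^*_{r,i}$ by the nearest tier-$i$ SBS, at distance $R_i$. For a two-dimensional HPPP of density $\lambda$ the nearest-point distance has density $2\pi\lambda r\,e^{-\pi\lambda r^2}$ and void probability $\Pr(R>\rho)=e^{-\pi\lambda\rho^2}$; because the tiers are independent PPPs, the distances $R_\mathrm{M},R_2,\dots,R_K$ are mutually independent. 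These are the only two probabilistic inputs required.

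For the MBS case I would condition on $R_\mathrm{M}=x$ and write the association event as $\bigcap_{i=2}^{K}\{G_a^{\mathrm D}\tfrac{P_\mathrm{M}}{S}\beta x^{-\alpha_\mathrm{M}}\ge P_i\beta R_i^{-\alpha_i}\}$. Isolating the competing distance in each inequality yields the threshold $R_i\ge \hat r_{\mathrm{MS}}\,x^{\alpha_\mathrm{M}/\alpha_i}$ with $\hat r_{\mathrm{MS}}=(G_a^{\mathrm D}P_\mathrm{M}/SP_i)^{-1/\alpha_i}$, exactly as defined, where $G_a^{\mathrm D}=N+S-1$ is taken from the received-power model. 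By independence the conditional association probability is $\prod_{i=2}^{K}e^{-\pi\lambda_i\hat r_{\mathrm{MS}}^2 x^{2\alpha_\mathrm{M}/\alpha_i}}$, so multiplying by the MBS contact density $2\pi\lambda_\mathrm{M}x\,e^{-\pi\lambda_\mathrm{M}x^2}$ gives the unnormalized joint density. Integrating it over $x\in(0,\infty)$ produces the association probability \eqref{M_Pr}, and dividing the unnormalized density by this integral yields the conditional PDF \eqref{DRSP_PDF_M}.

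For a tier-$k$ SBS the argument is identical but with two families of comparisons. Conditioning on $R_k=y$, the tier-$k$ SBS must beat the MBS, $P_k\beta y^{-\alpha_k}\ge G_a^{\mathrm D}\tfrac{P_\mathrm{M}}{S}\beta R_\mathrm{M}^{-\alpha_\mathrm{M}}$, which gives $R_\mathrm{M}\ge \hat r_{\mathrm{SM}}\,y^{\alpha_k/\alpha_\mathrm{M}}$ and the factor $e^{-\pi\lambda_\mathrm{M}\hat r_{\mathrm{SM}}^2 y^{2\alpha_k/\alpha_\mathrm{M}}}$ with $\hat r_{\mathrm{SM}}=(SP_k/G_a^{\mathrm D}P_\mathrm{M})^{-1/\alpha_\mathrm{M}}$; and it must beat every other tier $i\neq k$, $P_k\beta y^{-\alpha_k}\ge P_i\beta R_i^{-\alpha_i}$, giving $R_i\ge\hat r_{\mathrm{SS}}\,y^{\alpha_k/\alpha_i}$ with $\hat r_{\mathrm{SS}}=(P_k/P_i)^{-1/\alpha_i}$. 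The one bookkeeping trick is that the nearest-in-own-tier void factor $e^{-\pi\lambda_k y^2}$ from the contact density can be folded into the product as the $i=k$ term, since then $\hat r_{\mathrm{SS}}=1$ and $y^{2\alpha_k/\alpha_k}=y^2$; this is precisely why the sums in \eqref{Lemma-f-down-S} and \eqref{DRSP_k_prob} run over all $i=2,\dots,K$ rather than over $i\neq k$. Collecting the exponents reproduces \eqref{Lemma-f-down-S}, integration over $y$ gives \eqref{DRSP_k_prob}, and normalization yields the stated PDF.

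I expect the main obstacle to be purely the exponent and constant bookkeeping forced by the tier-dependent path-loss exponents: each power inequality, after isolating the competing distance, must generate the correct fractional exponent $\alpha_\mathrm{M}/\alpha_i$ or $\alpha_k/\alpha_i$ that feeds the $x^{2\alpha_\mathrm{M}/\alpha_i}$ and $y^{2\alpha_k/\alpha_i}$ terms, and the constants must be matched to $\hat r_{\mathrm{MS}},\hat r_{\mathrm{SM}},\hat r_{\mathrm{SS}}$ including the massive-MIMO gain $G_a^{\mathrm D}=N+S-1$. There is no convergence or measure-theoretic subtlety, and the integrals defining $\Psi_\mathrm{M}^{\mathrm{DRSP}}$ and $\Psi_k^{\mathrm{DRSP}}$ need not be evaluated in closed form, since the lemma only asserts these integral representations together with the normalized densities.
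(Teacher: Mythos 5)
Your proposal is correct and follows essentially the same route as the paper: the paper's Appendix~A simply notes that the downlink power gain is $G_a^{\mathrm{D}}=N+S-1$ and then defers to the flexible-cell-association argument of Jo \emph{et al.}~(the cited reference), which is exactly the nearest-BS-per-tier reduction, PPP void-probability, and normalization computation you carry out explicitly. Your observation that the own-tier contact factor $e^{-\pi\lambda_k y^2}$ is absorbed as the $i=k$ term (where $\hat r_{\mathrm{SS}}=1$) correctly explains why the sums run over all $i=2,\dots,K$, and all thresholds and exponents match the stated $\hat r_{\mathrm{MS}}$, $\hat r_{\mathrm{SM}}$, $\hat r_{\mathrm{SS}}$.
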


\begin{proof}
See Appendix~A.
\end{proof}

Based on \eqref{M_Pr}, we obtain a simplified asymptotic expression for the probability in the following corollary.

\begin{corollary}
For large number of antennas with $N \to \infty $, using the Taylor series expansion truncated to the first order, the probability that a typical user is associated with the MBS given by \eqref{M_Pr} is asymptotically derived as
\begin{align}
\Psi_{\mathrm{M}_\infty}^ \mathrm{DRSP} = 2\pi {\lambda _{\mathrm{M}}}\times
\left(\begin{array}{c}
\int_0^\infty  r\exp \left(- \pi \lambda _{\mathrm{M}}{r^2}\right)dr
- \pi \sum\limits_{i = 2}^K {\lambda _i}{\hat r}_{\mathrm{M}\mathrm{S}}^2\int_0^\infty  r^{1 + \frac{2\alpha _{\mathrm{M}}}{\alpha _i}}\exp \left( { - \pi {\lambda _{\mathrm{M}}}{r^2}} \right)dr
\end{array}\right),
\end{align}
which can be expressed as
\begin{equation}\label{M_Pr_asym}
\Psi_{\mathrm{M}_\infty}^ \mathrm{DRSP}= 1 - \pi \sum\limits_{i = 2}^K {{\lambda _i}{{{{\hat r}_{\mathrm{M}\mathrm{S}}^2}}}\frac{{\Gamma \left( {1 + \frac{\alpha _{\mathrm{M}}}{\alpha _i}} \right)}}{{{{\left( {\pi {\lambda _{\mathrm{M}}}} \right)}^{\frac{\alpha _{\mathrm{M}}}{\alpha _i}}}}}}.
\end{equation}
Note that the probability for a user associated with the SBS is $ 1- \Psi_{\mathrm{M}_\infty}^ \mathrm{DRSP}$. From \eqref{M_Pr_asym}, it is explicitly shown that the probability for a user associated with the MBS increases with the density of MBS but decreases with the density of SBS.
\end{corollary}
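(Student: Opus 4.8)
The plan is to exploit the fact that the downlink power gain $G_a^{\mathrm{D}}=N+S-1$ enters \eqref{M_Pr} only through $\hat{r}_{\mathrm{MS}}^2=(G_a^{\mathrm{D}}P_{\mathrm{M}}/(SP_i))^{-2/\alpha_i}$, which tends to zero as $N\to\infty$. Consequently the cross-tier contribution to the exponent becomes arbitrarily small, and the strategy is to linearize exactly that factor while keeping the Gaussian kernel intact.

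First I would factor the integrand of \eqref{M_Pr} as $\exp(-\pi\lambda_{\mathrm{M}}r^2)\exp(-\pi\sum_{i=2}^{K}\lambda_i\hat{r}_{\mathrm{MS}}^2 r^{2\alpha_{\mathrm{M}}/\alpha_i})$, separating the tier-independent Gaussian weight from the vanishing cross-tier factor. Since $\hat{r}_{\mathrm{MS}}^2\to0$, I would replace the second factor by its first-order Taylor approximation $1-\pi\sum_{i=2}^{K}\lambda_i\hat{r}_{\mathrm{MS}}^2 r^{2\alpha_{\mathrm{M}}/\alpha_i}$. Inserting this into \eqref{M_Pr} and distributing the integral across the two resulting terms reproduces immediately the first displayed expression of the corollary.

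Next I would evaluate the two integrals in closed form. The substitution $u=r^2$ gives $\int_0^\infty r\,e^{-\pi\lambda_{\mathrm{M}}r^2}\,dr=1/(2\pi\lambda_{\mathrm{M}})$, so that the prefactor $2\pi\lambda_{\mathrm{M}}$ turns the leading term into the constant $1$. For the correction term I would apply the standard Gamma integral $\int_0^\infty r^{s}e^{-ar^2}\,dr=\tfrac12 a^{-(s+1)/2}\Gamma((s+1)/2)$ with $s=1+2\alpha_{\mathrm{M}}/\alpha_i$ and $a=\pi\lambda_{\mathrm{M}}$, noting that $(s+1)/2=1+\alpha_{\mathrm{M}}/\alpha_i$. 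Gathering the constants then yields the correction $\pi\sum_{i=2}^{K}\lambda_i\hat{r}_{\mathrm{MS}}^2\,\Gamma(1+\alpha_{\mathrm{M}}/\alpha_i)/(\pi\lambda_{\mathrm{M}})^{\alpha_{\mathrm{M}}/\alpha_i}$, which establishes \eqref{M_Pr_asym}; the stated monotonicity in $\lambda_{\mathrm{M}}$ and $\lambda_i$ then follows by inspection of this closed form.

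The main obstacle is justifying the linearization rigorously, i.e.\ showing that truncating the exponential to first order and exchanging this with the integration contributes only a vanishing error as $N\to\infty$. I would control this by observing that $\hat{r}_{\mathrm{MS}}^2=O(N^{-2/\alpha_i})$ while every moment $\int_0^\infty r^{1+2m\alpha_{\mathrm{M}}/\alpha_i}e^{-\pi\lambda_{\mathrm{M}}r^2}\,dr$ converges thanks to the Gaussian kernel, so the $m$-th order Taylor term scales as $O(N^{-2m/\alpha_i})$ and is dominated by the retained first-order term. In keeping with the engineering level of the paper, presenting the first-order truncation as the stated asymptotic approximation is adequate.
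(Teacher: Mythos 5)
Your proposal is correct and follows essentially the same route as the paper: since $G_a^{\mathrm{D}}=N+S-1\to\infty$ forces $\hat{r}_{\mathrm{MS}}^2\to 0$, the cross-tier exponential factor in \eqref{M_Pr} is Taylor-expanded to first order, and the two resulting integrals are evaluated via $\int_0^\infty r e^{-\pi\lambda_{\mathrm{M}}r^2}dr=1/(2\pi\lambda_{\mathrm{M}})$ and the Gamma integral, exactly reproducing \eqref{M_Pr_asym}. Your added remark controlling the higher-order Taylor terms (each of order $O(N^{-2m/\alpha_i})$ with convergent Gaussian moments) is a justification the paper omits, but it does not change the argument.
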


Likewise, in the case of the URSP-based user association, we have the following lemma and corollary. As the approaches are similar, their proofs are omitted.

\begin{lemma}
Under URSP-based user association, the PDFs of the distance $ \left|X_{o,\mathrm{M}}\right| $ between a typical user and its serving MBS and the distance $ \left|X_{o,k}\right| $ between a typical user and its serving SBS in the $ k $-th tier are, respectively, given by
\begin{align}\label{Lemma-f-up-M}
f_{\left| {{X_{o,{\mathrm{M}}}}} \right|}^\mathrm{URSP}(x) = \frac{{2\pi x}}{{\Psi _{\mathrm{M}}^\mathrm{URSP}}}{\lambda _{\mathrm{M}}}\times
\exp\left(  - \pi {\lambda _{\mathrm{M}}}{x^2} - \pi \sum\limits_{i = 2}^K {{\lambda _i}{{\widetilde r}_\mathrm{MS}^2}{x^{\frac{2 \alpha_\mathrm{M}}{\alpha_i}} }} \right),
\end{align}
and
\begin{align}\label{Lemma-f-up-S}
f_{\left| {{X_{o,k}}} \right|}^\mathrm{URSP}(y) = \frac{{2\pi y}}{{\Psi _k^\mathrm{URSP}}}{\lambda _{k}}\times
\exp\left(  - \pi {\lambda _{\mathrm{M}}}{\widetilde r_\mathrm{SM}^2}{y^{\frac{2 \alpha_k}{\alpha_{\mathrm{M}}}}} - \pi \sum\limits_{i = 2}^K {{\lambda _i}{y^{\frac{2 \alpha_k}{\alpha_i}}}} \right),
\end{align}
where ${{\widetilde r}_\mathrm{MS}} = {\left (G_a^{\mathrm{U}} \right)^{\frac{-1}{\alpha_i}}}$ with $G_a^{\mathrm{U}}=(N-S+1)$, and ${\widetilde r}_{\mathrm{SM}} = \left( {\frac{1}{{G_a^{\mathrm{U}}}}} \right)^{\frac{-1}{\alpha_\mathrm{M}}}$. Also, in the above expressions, we have
\begin{align}\label{U_M_Pr}
\Psi_\mathrm{M}^ \mathrm{URSP}=2 \pi  \lambda_{\mathrm{M}} \times
\int_0^\infty  r \exp\left(- \pi \lambda_\mathrm{M} r^2-\pi \sum\limits_{i = 2}^K {\lambda_i {{\widetilde r}_{\mathrm{M}\mathrm{S}}^2}{r^{\frac{2{\alpha _{\mathrm{M}}}}{\alpha _i}}}}  \right)dr,
\end{align}
and
\begin{align}\label{URSP_k_prob}
\Psi_k^\mathrm{URSP} = 2\pi {\lambda _k}\times
\int_0^\infty r {\exp} \left(  - \pi {\lambda _{\mathrm{M}}}{{\widetilde r}_{\mathrm{SM}}^2}{{r^{\frac{2\alpha _k}{\alpha _{\mathrm{M}}}}}} - \pi \sum\limits_{i = 2}^K {{\lambda _i}{r^{\frac{2\alpha _k}{\alpha _i}}}} \right)dr.
\end{align}
\end{lemma}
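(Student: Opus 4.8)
The plan is to mirror the argument used for Lemma~1 (Appendix~A), the only change being that the quantity to be maximised is now the compensated path loss $L^*(\cdot)$ of \eqref{URSP_UA} rather than the average received power of \eqref{DRSP_UA}. Since $\Phi_\mathrm{M}$ and each $\Phi_i$ are independent HPPPs, for a typical user placed at the origin the distance to the nearest BS of any tier obeys the standard law: the probability that a density-$\lambda$ HPPP has no point inside a disc of radius $r$ is the void probability $\exp(-\pi\lambda r^2)$, and the nearest-point distance has PDF $2\pi\lambda r\exp(-\pi\lambda r^2)$. Because $L(r)=\beta r^{-\alpha}$ is decreasing in $r$, the largest value of $L$ within each tier is attained at the nearest BS of that tier, so it suffices to track nearest-neighbour distances.

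For the MBS branch I would condition on the nearest MBS lying at distance $x$, which contributes the factor $2\pi\lambda_\mathrm{M} x\exp(-\pi\lambda_\mathrm{M} x^2)$. By \eqref{URSP_UA}, association with this MBS is the event that $G_a^\mathrm{U}\beta x^{-\alpha_\mathrm{M}}$ exceeds $\beta|X_i|^{-\alpha_i}$ for every tier $i=2,\dots,K$; inverting this inequality (the negative exponent flips its direction) gives the threshold $|X_i|>(G_a^\mathrm{U})^{-1/\alpha_i}x^{\alpha_\mathrm{M}/\alpha_i}=\widetilde r_\mathrm{MS}\,x^{\alpha_\mathrm{M}/\alpha_i}$. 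By independence the probability that all tiers satisfy this is the product of void probabilities $\prod_{i=2}^K\exp(-\pi\lambda_i\widetilde r_\mathrm{MS}^2 x^{2\alpha_\mathrm{M}/\alpha_i})$. Multiplying by the conditioning factor yields the unnormalised density in \eqref{Lemma-f-up-M}; integrating it over $x\in(0,\infty)$ produces $\Psi_\mathrm{M}^\mathrm{URSP}$ of \eqref{U_M_Pr}, and dividing by $\Psi_\mathrm{M}^\mathrm{URSP}$ gives the stated conditional PDF.

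For the SBS branch in tier $k$ I would condition on the nearest tier-$k$ SBS at distance $y$, giving $2\pi\lambda_k y\exp(-\pi\lambda_k y^2)$. Association now requires $\beta y^{-\alpha_k}$ to dominate both the MBS term and every other SBS tier. The MBS condition $\beta y^{-\alpha_k}>G_a^\mathrm{U}\beta|X_\mathrm{M}|^{-\alpha_\mathrm{M}}$ gives $|X_\mathrm{M}|>(G_a^\mathrm{U})^{1/\alpha_\mathrm{M}}y^{\alpha_k/\alpha_\mathrm{M}}=\widetilde r_\mathrm{SM}\,y^{\alpha_k/\alpha_\mathrm{M}}$, contributing $\exp(-\pi\lambda_\mathrm{M}\widetilde r_\mathrm{SM}^2 y^{2\alpha_k/\alpha_\mathrm{M}})$. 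The condition against tier $i\neq k$, namely $\beta y^{-\alpha_k}>\beta|X_i|^{-\alpha_i}$, carries no power-gain coefficient since both sides are bare path losses, and gives $|X_i|>y^{\alpha_k/\alpha_i}$, hence $\exp(-\pi\lambda_i y^{2\alpha_k/\alpha_i})$. The one bookkeeping point to watch is that the $i=k$ term of this product coincides with the conditioning void factor $\exp(-\pi\lambda_k y^2)$, so absorbing it lets the whole product be written as the single clean sum $\exp(-\pi\sum_{i=2}^K\lambda_i y^{2\alpha_k/\alpha_i})$; this is what collapses the intermediate expression into the compact form \eqref{Lemma-f-up-S}, with $\Psi_k^\mathrm{URSP}$ of \eqref{URSP_k_prob} again obtained as the normalising integral.

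The computation is routine once the criterion is translated into distance thresholds; the only genuinely delicate point, and the one I would flag as the main obstacle, is getting the threshold coefficients right. Unlike the DRSP case, where each comparison inherits the transmit-power ratios and produces $\hat r_\mathrm{SS}=(P_k/P_i)^{-1/\alpha_i}$, the URSP criterion compares pure path losses, so the transmit powers cancel in every SBS-versus-SBS comparison and only the uplink power gain $G_a^\mathrm{U}=N-S+1$ survives, yielding $\widetilde r_\mathrm{MS}=(G_a^\mathrm{U})^{-1/\alpha_i}$ and $\widetilde r_\mathrm{SM}=(G_a^\mathrm{U})^{1/\alpha_\mathrm{M}}$. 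Keeping the inequality directions consistent when passing from path loss to distance, and tracking which tier carries the power gain, is therefore where the care is needed.
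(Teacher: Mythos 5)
Your proposal is correct and follows exactly the route the paper intends: the paper omits this proof as "similar" to Lemma~1, whose own proof is just a pointer to the standard nearest-BS/void-probability argument of Jo et al., which is precisely what you carry out, with the correct thresholds $\widetilde r_\mathrm{MS}=(G_a^\mathrm{U})^{-1/\alpha_i}$ and $\widetilde r_\mathrm{SM}=(G_a^\mathrm{U})^{1/\alpha_\mathrm{M}}$ and the correct observation that transmit powers cancel in the SBS-versus-SBS comparison so that $\hat r_\mathrm{SS}\to 1$. Your bookkeeping remark about absorbing the $i=k$ void factor into the sum is also right and is exactly how \eqref{Lemma-f-up-S} collapses to its compact form.
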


\begin{corollary}
For URSP-based user association, with large $N$, the asymptotic expression for the probability that a typical user is associated with the MBS given by \eqref{U_M_Pr} can be expressed as
\begin{align}\label{U_M_Pr_asym}
\Psi_{\mathrm{M}_\infty}^ \mathrm{URSP} =  1 - \pi \sum\limits_{i = 2}^K {{\lambda _i}{{{{\widetilde r}_{\mathrm{M}\mathrm{S}}^2}}}\frac{{\Gamma \left( {1 + \frac{\alpha _{\mathrm{M}}}{\alpha _i}} \right)}}{{{{\left( {\pi {\lambda _{\mathrm{M}}}} \right)}^{\frac{\alpha _{\mathrm{M}}}{\alpha _i}}}}}}.
\end{align}
In addition, the probability that a user is associated with the SBS can be directly found by $1-\Psi_{\mathrm{M}_\infty}^ \mathrm{URSP}$.
\end{corollary}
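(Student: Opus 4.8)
The plan is to follow the same route used for Corollary~1, working directly from the exact integral \eqref{U_M_Pr} and exploiting the fact that the array gain $G_a^{\mathrm{U}} = N-S+1$ enters only through $\widetilde r_{\mathrm{MS}}^2 = (N-S+1)^{-2/\alpha_i}$, which vanishes as $N\to\infty$. First I would split the exponential in the integrand of \eqref{U_M_Pr} into the product of a purely macrocell Gaussian factor $\exp(-\pi\lambda_{\mathrm{M}} r^2)$ and the small-cell factor $\exp\bigl(-\pi\sum_{i=2}^K \lambda_i \widetilde r_{\mathrm{MS}}^2 r^{2\alpha_{\mathrm{M}}/\alpha_i}\bigr)$, whose exponent carries the shrinking coefficient $\widetilde r_{\mathrm{MS}}^2$.

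Next, since $\widetilde r_{\mathrm{MS}}^2\to 0$ as $N\to\infty$, I would apply the first-order Taylor expansion $e^{-u}\approx 1-u$ to the second factor, replacing it by $1-\pi\sum_{i=2}^K \lambda_i \widetilde r_{\mathrm{MS}}^2 r^{2\alpha_{\mathrm{M}}/\alpha_i}$. Substituting back and using linearity of the integral splits $\Psi_{\mathrm{M}_\infty}^{\mathrm{URSP}}$ into $2\pi\lambda_{\mathrm{M}}\int_0^\infty r\,e^{-\pi\lambda_{\mathrm{M}} r^2}\,dr$ minus $2\pi^2\lambda_{\mathrm{M}}\sum_{i=2}^K \lambda_i \widetilde r_{\mathrm{MS}}^2\int_0^\infty r^{1+2\alpha_{\mathrm{M}}/\alpha_i}e^{-\pi\lambda_{\mathrm{M}} r^2}\,dr$. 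This is exactly the intermediate form that Corollary~1 writes out for the DRSP case, with $\widetilde r_{\mathrm{MS}}$ replacing $\hat r_{\mathrm{MS}}$, so the remaining computation is identical.

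The two integrals then close in elementary form. The first equals $\tfrac{1}{2\pi\lambda_{\mathrm{M}}}$, so the leading term is exactly $1$. For the second I would use the Gaussian-moment identity $\int_0^\infty r^{s}e^{-a r^2}\,dr=\tfrac{1}{2}a^{-(s+1)/2}\,\Gamma\bigl(\tfrac{s+1}{2}\bigr)$ with $s=1+2\alpha_{\mathrm{M}}/\alpha_i$ and $a=\pi\lambda_{\mathrm{M}}$, yielding $\tfrac{1}{2}(\pi\lambda_{\mathrm{M}})^{-(1+\alpha_{\mathrm{M}}/\alpha_i)}\Gamma(1+\alpha_{\mathrm{M}}/\alpha_i)$. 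Collecting the prefactor $2\pi^2\lambda_{\mathrm{M}}$ against $\tfrac{1}{2}(\pi\lambda_{\mathrm{M}})^{-(1+\alpha_{\mathrm{M}}/\alpha_i)}$ cancels the excess power of $\pi\lambda_{\mathrm{M}}$ down to $(\pi\lambda_{\mathrm{M}})^{-\alpha_{\mathrm{M}}/\alpha_i}$, giving precisely the summand of \eqref{U_M_Pr_asym}.

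The only genuine subtlety — and the step I would flag as the main obstacle — is justifying the termwise Taylor truncation inside the integral: although $\widetilde r_{\mathrm{MS}}^2\to 0$, the exponent also carries the growing factor $r^{2\alpha_{\mathrm{M}}/\alpha_i}$, so its argument is not uniformly small over the whole half-line. The resolution is that the macrocell Gaussian weight $e^{-\pi\lambda_{\mathrm{M}} r^2}$ confines the effective support to moderate $r$ and guarantees convergence of every resulting moment integral, so the neglected higher-order terms carry higher powers of $\widetilde r_{\mathrm{MS}}^2$ and vanish faster as $N\to\infty$; a dominated-convergence argument makes this rigorous, though the corollary statement already commits to the first-order approximation. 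Finally, the companion claim for the small-cell association probability follows immediately as $1-\Psi_{\mathrm{M}_\infty}^{\mathrm{URSP}}$, since association with a macrocell and with a small cell are complementary events.
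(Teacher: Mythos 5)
Your proposal is correct and follows essentially the same route as the paper: the paper derives the DRSP counterpart (Corollary~1) by exactly this first-order Taylor truncation of the small-cell exponential factor followed by Gaussian-moment integrals, and states that the URSP case is obtained by the same argument with $\hat r_{\mathrm{MS}}$ replaced by $\widetilde r_{\mathrm{MS}}$, which is what you do. Your additional remark on justifying the termwise truncation via the Gaussian weight and dominated convergence is a sound clarification of a step the paper leaves implicit.
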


\subsection{{Average Harvested Energy}}
Using DRSP-based user association, the maximum average harvested energy can be achieved. Here, we first derive the conditional expression of the average harvested energy given the distance between a typical user and its serving BS.

\begin{figure*}[!t]
\normalsize
\begin{multline}\label{DRSP_con_AverEH_M}
\widetilde{\mathrm{E}}_{o,\mathrm{M}}^{\mathrm{DRSP}}\left(x\right)= \eta \Bigg\{\left({N+S-1}\right) {{\frac{P_\mathrm{M}}{S}}}\beta \left({\rm{\mathbf{1}}}\left( x \leq d \right) d^{-\alpha_\mathrm{M}}+{\rm{\mathbf{1}}}\left( x > d \right) x^{-\alpha_\mathrm{M}}\right)\\
+P_\mathrm{M} \beta 2 \pi {\lambda _{\mathrm{M}}} \left({\rm{\mathbf{1}}}\left( x \leq d \right) \left({d^{ - {\alpha _{\mathrm{M}}}}} \frac{{({d^2} - {x^2})}}{2} - \frac{{{d^{2 - {\alpha _{\mathrm{M}}}}}}}{{2 - {\alpha _{\mathrm{M}}}}}\right) -{\rm{\mathbf{1}}}\left( x > d \right) \frac{{{x^{2 - {\alpha _{\mathrm{M}}}}}}}{{2 - {\alpha _{\mathrm{M}}}}} \right)\\
+\sum\limits_{i = 2}^K P_i \beta 2 \pi {\lambda _{i}}\left.\left({\rm{\mathbf{1}}}\left( x \leq d_o \right) \left( {d^{- {\alpha _i}}}\frac{{\left({d^2} - {{\hat r}_{{\mathrm{M}}\mathrm{S}}}^2 {x^{\frac{2{\alpha _{\mathrm{M}}}}{\alpha _i}}}\right)}}{2} - \frac{{{d^{2 - {\alpha _i}}}}}{{2 - {\alpha _i}}} \right)-{\rm{\mathbf{1}}}\left( x > d_o \right) \frac{{{\hat r}_{{\mathrm{M}}\mathrm{S}}}^{\left(2-\alpha _i\right)} {x^{\frac{{\alpha _{\mathrm{M}}}(2-\alpha _i) }{\alpha _i}}}}{2-\alpha _i} \right) \right\}\times\tau T,
\end{multline}
\hrulefill %\vspace*{0pt}
\end{figure*}
\begin{figure*}[!t]
\normalsize
\begin{multline}\label{DRSP_con_AverEH_k}
\widetilde{\mathrm{E}}_{o,k}^{\mathrm{DRSP}}\left(y\right)= \eta \Bigg\{ {P_k} \beta  \left({\rm{\mathbf{1}}}\left( y\leq d \right) d^{-\alpha_k}+{\rm{\mathbf{1}}}\left( y > d \right) y^{-\alpha_k}\right)\\
+P_\mathrm{M} \beta 2 \pi {\lambda _{\mathrm{M}}} \left({\rm{\mathbf{1}}}\left( y \leq d_1 \right) \left({d^{ - {\alpha _{\rm M}}}}{\frac{{\left({d^2} - {{\hat r}_{\mathrm{SM}}^2}{{y^{\frac{2\alpha _k}{\alpha _{\mathrm{M}}}}}}    \right)}}{2} - \frac{{{d^{2 - {\alpha _{\rm M}}}}}}{{2 - {\alpha _{\rm M}}}}}\right)- {\rm{\mathbf{1}}}\left( y > d_1 \right)  \frac{{\hat r}_{\mathrm{SM}}^{2-\alpha _{\rm M}} y^{\frac{\alpha _k\left(2-\alpha _{\rm M}\right)}{\alpha _{\mathrm{M}}}} }{{2 - {\alpha _{\rm M}}}} \right)\\
+\sum\limits_{i = 2}^K  \beta 2 \pi {\lambda_i}\left. \left({\rm{\mathbf{1}}}\left( y \leq d_2 \right) \left({{d^{ - {\alpha _i}}} \frac{{\left({d^2} - {{\hat r}_{\mathrm{SS}}^2}y^{\frac{2 \alpha _k}{\alpha _i}} \right)}}{2} - \frac{{{d^{2 - {\alpha _i}}}}}{{2 - {\alpha _i}}}}  \right) - {\rm{\mathbf{1}}}\left( y > d_2 \right) \frac{{{\hat r}_{\mathrm{SS}}}^{2 - {\alpha _i}}y^{\frac{ \alpha _k \left(2 - {\alpha _i}\right)}{\alpha _i}}  }{{2 - {\alpha _i}}}  \right) \right\}\times\tau T,
\end{multline}
\hrulefill% \vspace*{0pt}
\end{figure*}
\begin{figure*}[!t]
\normalsize
\begin{multline}\label{DRSP_AverEH_M_asymptotic}
\overline{\mathrm{E}}_{o,{\mathrm{M}_\infty}}^{\mathrm{DRSP}}=\eta \Bigg\{\left({N+S-1}\right) {{\frac{P_\mathrm{M}}{S}}}\beta \left(\Xi_1\left(d\right) d^{-\alpha_\mathrm{M}}+\Xi_2 \left( {d,-\alpha_\mathrm{M}} \right)\right)\\
+P_\mathrm{M} \beta 2 \pi {\lambda _{\mathrm{M}}} \left(d^{2- {\alpha _{\mathrm{M}}}}\frac{\alpha _{\mathrm{M}}}{2\left(\alpha _{\mathrm{M}}-2\right)}\Xi_1\left(d\right)-\frac{d^{ - {\alpha _{\mathrm{M}}}}}{2}\Xi_3 \left( {d,2} \right)+\frac{\Xi_2 \left(d,2-\alpha _{\mathrm{M}}\right)}{\alpha _{\mathrm{M}}-2}\right)+\sum\limits_{i = 2}^K P_i \beta 2 \pi {\lambda _{i}}\\
\times \left.\left( d^{2-\alpha_i}\frac{\alpha_i}{2\left(\alpha_i-2\right)} \Xi_1\left(d_o\right)-\frac{d^{-\alpha_i} {{\hat r}_{{\mathrm{M}}\mathrm{S}}}^2 }{2} \Xi_3 \left( d_o,\frac{2{\alpha _{\mathrm{M}}}}{\alpha _i}\right) + \frac{{{\hat r}_{{\mathrm{M}}\mathrm{S}}}^{\left(2-\alpha _i\right)}}{\alpha _i-2} \Xi_2 \left( d_o, {\frac{{\alpha _{\mathrm{M}}}(2-\alpha _i) }{\alpha _i}}\right) \right)\right\}\times\tau T,
\end{multline}
\hrulefill %\vspace*{0pt}
\end{figure*}

\begin{theorem}
For the case of DRSP-based user association, given the distances $ \left|X_{o,\mathrm{M}}\right|=x$ and $ \left|X_{o,k}\right|=y$, the conditional expressions of the average harvested energy for a typical user that is associated with an MBS and that for a typical user that is associated with an SBS in the $k$-th tier are, respectively, given by \eqref{DRSP_con_AverEH_M} and \eqref{DRSP_con_AverEH_k} at the top of next page,  $d_o=\left({{\hat r}_{{\mathrm{M}}\mathrm{S}}}\right)^{-\frac{\alpha _i}{\alpha_{\mathrm{M}}}}d^{{\alpha _i}/{\alpha _{\mathrm{M}}}}$, $d_1={\left({\hat r}_{\mathrm{SM}}\right)^{\frac{-\alpha _{\rm M}}{\alpha _k}}}{d^{{\alpha _{\rm M}}/{\alpha _k}}}$, and $d_2={\left({{\hat r}_{\mathrm{SS}}}\right)^{\frac{-\alpha _i}{\alpha _k}}}{d^{{\alpha _i}/{\alpha _{k}}}}$.
\end{theorem}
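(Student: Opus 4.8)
The plan is to compute $\widetilde{\mathrm{E}}_{o,\mathrm{M}}^{\mathrm{DRSP}}(x)=\mathbb{E}\{\mathrm{E}_{o,\mathrm{M}}\mid |X_{o,\mathrm{M}}|=x\}$ by averaging the three summands of \eqref{MBS_RE_Power} separately, exploiting the mutual independence of the fading variables, the point processes of the different tiers, and the conditioning distance. The directed-plus-isotropic part is elementary; the ambient-RF part is the substance of the argument and rests on a Campbell/Slivnyak computation with an exclusion region dictated by the DRSP criterion.

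First I would treat $\mathrm{E}_{o,\mathrm{M}}^{1}+\mathrm{E}_{o,\mathrm{M}}^{2}$. Since $h_{o}\sim\Gamma(N,1)$ and $h'_{o}\sim\exp(1)$ are independent of the geometry, averaging over them replaces $h_{o}$ by its mean $N$ and $h'_{o}$ by its mean $1$. Collecting the time weights $\tfrac{\tau T}{S}$ and $\tfrac{(S-1)\tau T}{S}$ produces the common factor $(N+S-1)\tfrac{P_{\mathrm{M}}}{S}$, and writing $L(\max\{x,d\})=\beta(\max\{x,d\})^{-\alpha_{\mathrm{M}}}$ as a piecewise expression through the indicators $\mathbf{1}(x\le d)$ and $\mathbf{1}(x>d)$ yields exactly the first line of \eqref{DRSP_con_AverEH_M}.

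Second, for $\mathrm{E}_{o,\mathrm{M}}^{3}=\eta(I_{\mathrm{M},1}+I_{\mathrm{S},1})\tau T$ I would invoke Campbell's theorem: because every interfering fading gain has unit mean ($\Gamma(1,1)$ or $\exp(1)$), the mean interference contributed by a tier of density $\lambda$ is $P\,\lambda\int_{\rho}^{\infty}L(\max\{r,d\})\,2\pi r\,dr$, the integration starting at the exclusion radius $\rho$. The crucial ingredient is that this radius is fixed by the DRSP rule \eqref{DRSP_UA}: conditioned on the serving MBS lying at distance $x$, no other MBS may deliver a larger received power, so interfering MBSs are expelled from the disk of radius $x$; and no tier-$i$ SBS may beat the serving MBS, which—by precisely the inequality defining $\hat{r}_{\mathrm{MS}}$ in Lemma~1—expels tier-$i$ SBSs from the disk of radius $\hat{r}_{\mathrm{MS}}\,x^{\alpha_{\mathrm{M}}/\alpha_{i}}$. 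By Slivnyak's theorem the remaining points still form the original PPPs restricted to the complements of these disks, and independence across tiers lets me sum the contributions.

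Third, I would evaluate the integral, splitting at $r=d$ because of the short-range cap; for $\alpha>2$,
\[
\int_{\rho}^{\infty}\bigl(\max\{r,d\}\bigr)^{-\alpha}\,r\,dr=\mathbf{1}(\rho\le d)\left(d^{-\alpha}\frac{d^{2}-\rho^{2}}{2}-\frac{d^{2-\alpha}}{2-\alpha}\right)-\mathbf{1}(\rho>d)\frac{\rho^{2-\alpha}}{2-\alpha}.
\]
Setting $\rho=x$, $\alpha=\alpha_{\mathrm{M}}$ gives the second line of \eqref{DRSP_con_AverEH_M}; setting $\rho=\hat{r}_{\mathrm{MS}}\,x^{\alpha_{\mathrm{M}}/\alpha_{i}}$, $\alpha=\alpha_{i}$ gives the third, and the case split $\rho\le d$ versus $\rho>d$ becomes $x\le d_{o}$ versus $x>d_{o}$ with $d_{o}=(\hat{r}_{\mathrm{MS}})^{-\alpha_{i}/\alpha_{\mathrm{M}}}d^{\alpha_{i}/\alpha_{\mathrm{M}}}$, exactly as stated. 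The SBS-served expression \eqref{DRSP_con_AverEH_k} follows by the identical argument using $P_{k}$ and $g_{o}\sim\exp(1)$ (unit mean, hence no array-gain prefactor) for the serving link, and the exclusion radii $\hat{r}_{\mathrm{SM}}\,y^{\alpha_{k}/\alpha_{\mathrm{M}}}$ and $\hat{r}_{\mathrm{SS}}\,y^{\alpha_{k}/\alpha_{i}}$, which generate the thresholds $d_{1}$ and $d_{2}$. The main obstacle is not the integration but the correct identification and bookkeeping of the exclusion regions: one must justify that conditioning on the serving distance leaves the interferers as independent PPPs on the complements of the association disks, and then propagate the reference-distance split $d$ cleanly through every tier so that the thresholds $d_{o},d_{1},d_{2}$ emerge in closed form.
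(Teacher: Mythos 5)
Your proposal is correct and follows essentially the same route as the paper's Appendix~B: averaging the fading to get the $(N+S-1)P_{\mathrm{M}}/S$ prefactor, then applying Campbell's theorem to the ambient-RF terms with exclusion radii $x$, $\hat{r}_{\mathrm{MS}}x^{\alpha_{\mathrm{M}}/\alpha_i}$ (and $\hat{r}_{\mathrm{SM}}y^{\alpha_k/\alpha_{\mathrm{M}}}$, $\hat{r}_{\mathrm{SS}}y^{\alpha_k/\alpha_i}$) dictated by the DRSP rule, and evaluating $\int_{\rho}^{\infty}(\max\{r,d\})^{-\alpha}r\,dr$ with the split at $r=d$ to obtain the thresholds $d_o,d_1,d_2$. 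Your explicit appeal to Slivnyak's theorem to justify the conditional PPP structure is a slightly more careful statement of what the paper leaves implicit, but the argument is the same.
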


\begin{proof}
See Appendix~B.
\end{proof}

Based on Theorem 1, the average harvested energy for a user that is associated with an MBS and that a user that is associated with an SBS in the $k$-th tier are found as
\begin{align}\label{DRSP_AverEH_M}
&\overline{\mathrm{E}}_{o,\mathrm{M}}^{\mathrm{DRSP}}=\int_0^\infty \widetilde{\mathrm{E}}_{o,\mathrm{M}}^{\mathrm{DRSP}}\left(x\right)  f_{\left| {{X_{o,{\mathrm{M}}}}} \right|}^\mathrm{DRSP}(x) dx,
\end{align}
and
\begin{align}\label{DRSP_AverEH_k}
&\overline{\mathrm{E}}_{o,k}^{\mathrm{DRSP}}=\int_0^\infty \widetilde{\mathrm{E}}_{o,k}^{\mathrm{DRSP}}\left(y\right)  f_{\left| {{X_{o,k}}} \right|}^\mathrm{DRSP}(y) dy.
\end{align}
%where $f_{\left| {{X_{o,{\mathrm{M}}}}} \right|}^\mathrm{DRSP}(x)$ and $f_{\left| {{X_{o,k}}} \right|}^\mathrm{DRSP}(y)$ are given by \eqref{DRSP_PDF_M} and \eqref{Lemma-f-down-S}, respectively.

\begin{corollary}
When the number of antennas at the MBS grows large, we obtain the asymptotic expression for $\overline{\mathrm{E}}_{o,\mathrm{M}}^{\mathrm{DRSP}}$ in \eqref{DRSP_AverEH_M}  as \eqref{DRSP_AverEH_M_asymptotic} (see next page), where $\Xi_1(\cdot)$, $\Xi_2 \left( \cdot,\cdot\right)$ and $\Xi_3 \left( \cdot,\cdot\right)$ are, respectively, given by
\begin{align}\label{asym_CDF_exp_M}
\Xi_1(x) = \frac{1}{\Psi_{\mathrm{M}_\infty}^ \mathrm{DRSP}}\times
\left(1-e^{ - \pi {\lambda_{\rm{M}}}{x^2}}-\pi \sum\limits_{i = 2}^K {{\lambda _i}} \hat r_{{\rm{MS}}}^2
\frac{{\gamma \left( {1 + \frac{\alpha _{\rm{M}}}{\alpha _i},\pi {\lambda _{\rm{M}}}{x^2}} \right)}}{{{{\left( {\pi {\lambda _{\rm{M}}}} \right)}^{\frac{\alpha _{\rm{M}}}{\alpha _i}}}}}     \right),
\end{align}
\begin{align}\label{E_a_b_function}
\Xi_2 \left( {a{\rm{,}}b} \right)=\frac{1}{{\Psi _{{{\rm{M}}_\infty }}^{{\rm{DRSP}}}}}\left( \frac{{\Gamma \left( {1 + \frac{b}{2},\pi {\lambda _{\rm{M}}}{a^2}} \right)}}{{{{\left( {\pi {\lambda _{\rm{M}}}} \right)}^{\frac{b}{2}}}}}-\right.
\left.\pi \sum\limits_{i = 2}^K {{\lambda _i}} \hat r_{{\rm{MS}}}^2
\frac{{\Gamma \left( {1 + \frac{{{\alpha _{\rm{M}}}}}{{{\alpha _i}}} + \frac{b}{2},\pi {\lambda _{\rm{M}}}{a^2}} \right)}}{{{{\left( {\pi {\lambda _{\rm{M}}}} \right)}^{\frac{\alpha _{\rm{M}}}{\alpha _i} + \frac{b}{2}}}}} \right),
\end{align}
and
\begin{align}\label{E_a_b2_function}
\Xi_3 \left( {c{\rm{,}}d} \right)=\frac{1}{{\Psi _{{{\rm{M}}_\infty }}^{{\rm{DRSP}}}}}\left( \frac{{\gamma \left( {1 + \frac{d}{2},\pi {\lambda _{\rm{M}}}{c^2}} \right)}}{{{{\left( {\pi {\lambda _{\rm{M}}}} \right)}^{\frac{d}{2}}}}} -\right.
\left.\pi \sum\limits_{i = 2}^K {{\lambda _i}} \hat r_{{\rm{MS}}}^2
\frac{{\gamma \left( {1 + \frac{{{\alpha _{\rm{M}}}}}{{{\alpha _i}}} + \frac{d}{2},\pi {\lambda _{\rm{M}}}{c^2}} \right)}}{{{{\left( {\pi {\lambda _{\rm{M}}}} \right)}^{\frac{\alpha _{\rm{M}}}{\alpha _i} + \frac{d}{2}}}}} \right),
\end{align}
where $\gamma\left(\cdot,\cdot\right)$ and  $\Gamma\left(\cdot,\cdot\right)$ are the upper and lower incomplete gamma functions, respectively~\cite[(8.350)]{gradshteyn}.
\end{corollary}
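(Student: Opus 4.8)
The plan is to start from the exact integral representation \eqref{DRSP_AverEH_M}, namely $\overline{\mathrm{E}}_{o,\mathrm{M}}^{\mathrm{DRSP}}=\int_0^\infty \widetilde{\mathrm{E}}_{o,\mathrm{M}}^{\mathrm{DRSP}}(x)\, f_{|X_{o,\mathrm{M}}|}^{\mathrm{DRSP}}(x)\,dx$, and to invoke the same large-$N$ linearization already used in Corollary~1. Since $G_a^{\mathrm{D}}=N+S-1\to\infty$ forces $\hat r_{\mathrm{MS}}^2=(G_a^{\mathrm{D}}P_\mathrm{M}/(SP_i))^{-2/\alpha_i}\to 0$, the cross term in the exponent of the PDF \eqref{DRSP_PDF_M} becomes vanishingly small, so I would replace $\exp(-\pi\sum_{i=2}^{K}\lambda_i\hat r_{\mathrm{MS}}^2 x^{2\alpha_\mathrm{M}/\alpha_i})$ by its first-order Taylor approximation $1-\pi\sum_{i=2}^{K}\lambda_i\hat r_{\mathrm{MS}}^2 x^{2\alpha_\mathrm{M}/\alpha_i}$, exactly as in the derivation of \eqref{M_Pr_asym}. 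This produces an asymptotic density proportional to $2\pi\lambda_\mathrm{M}x\,e^{-\pi\lambda_\mathrm{M}x^2}(1-\pi\sum_{i=2}^{K}\lambda_i\hat r_{\mathrm{MS}}^2 x^{2\alpha_\mathrm{M}/\alpha_i})/\Psi_{\mathrm{M}_\infty}^{\mathrm{DRSP}}$, against which all the moments appearing below can be evaluated in closed form.

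Next I would observe that, after stripping its indicator, every summand of the conditional energy \eqref{DRSP_con_AverEH_M} is either a constant or a pure power $x^b$, so integrating it against the asymptotic density reduces to one of three canonical quantities. Integrating a constant over $\{x\le d\}$ (or $\{x\le d_o\}$) returns the truncated mass $\Xi_1(\cdot)$ of \eqref{asym_CDF_exp_M}; integrating $x^b$ over the tail $\{x>a\}$ returns $\Xi_2(a,b)$ of \eqref{E_a_b_function}; and integrating $x^d$ over $\{x\le c\}$ returns $\Xi_3(c,d)$ of \eqref{E_a_b2_function}. Each of these follows from the substitution $t=\pi\lambda_\mathrm{M}x^2$, which turns $\int x^{m}e^{-\pi\lambda_\mathrm{M}x^2}\,dx$ into a lower or upper incomplete gamma integral and produces the arguments $1+b/2$, $1+\alpha_\mathrm{M}/\alpha_i+b/2$ together with the prefactors $(\pi\lambda_\mathrm{M})^{-b/2}$ seen in the definitions; the linear $\hat r_{\mathrm{MS}}^2$ correction merely shifts the gamma-function order by $\alpha_\mathrm{M}/\alpha_i$.

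With these identifications in hand, I would assemble the three groups of terms of \eqref{DRSP_con_AverEH_M} separately. The directed-WPT line maps to $(N+S-1)\frac{P_\mathrm{M}}{S}\beta(\Xi_1(d)d^{-\alpha_\mathrm{M}}+\Xi_2(d,-\alpha_\mathrm{M}))$; the isotropic-plus-MBS-interference line produces $\Xi_1(d)$, $\Xi_3(d,2)$ and $\Xi_2(d,2-\alpha_\mathrm{M})$ contributions, where the two separate $\Xi_1(d)$ pieces (with coefficients $\tfrac{1}{2}$ and $-\tfrac{1}{2-\alpha_\mathrm{M}}$) combine through the identity $\tfrac{1}{2}-\tfrac{1}{2-\alpha_\mathrm{M}}=\tfrac{\alpha_\mathrm{M}}{2(\alpha_\mathrm{M}-2)}$ to give the coefficient in \eqref{DRSP_AverEH_M_asymptotic}; and the SBS-interference sum is handled identically, with $d$ replaced by the threshold $d_o$ and the exponent $\alpha_\mathrm{M}$ replaced by $\alpha_i$, yielding the $\Xi_1(d_o)$, $\Xi_3(d_o,2\alpha_\mathrm{M}/\alpha_i)$ and $\Xi_2(d_o,\alpha_\mathrm{M}(2-\alpha_i)/\alpha_i)$ terms. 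Collecting everything and multiplying by $\tau T$ then reproduces \eqref{DRSP_AverEH_M_asymptotic}.

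I expect the main obstacle to be purely the bookkeeping: each of the three energy groups is split by an indicator at a different reference threshold ($d$ for the macrocell terms, $d_o$ for the SBS interference), and the power exponents $2-\alpha_\mathrm{M}$, $-\alpha_\mathrm{M}$, $2\alpha_\mathrm{M}/\alpha_i$ and $\alpha_\mathrm{M}(2-\alpha_i)/\alpha_i$ must each be routed to the correct incomplete-gamma order without a sign or prefactor slip. A secondary point deserving one sentence of justification is the legitimacy of the Taylor truncation itself, which rests on $\hat r_{\mathrm{MS}}^2=O(N^{-2/\alpha_i})\to0$, so that the neglected $O(\hat r_{\mathrm{MS}}^4)$ terms vanish in the limit and the first-order expansion becomes asymptotically exact.
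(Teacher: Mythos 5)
Your proposal is correct and follows essentially the same route as the paper's Appendix C: first-order Taylor expansion of the cross-term exponential in the conditional PDF (justified by $\hat r_{\mathrm{MS}}^2\to 0$ as $N\to\infty$), reduction of every term of \eqref{DRSP_con_AverEH_M} to the three canonical integrals $\Xi_1$, $\Xi_2$, $\Xi_3$, and evaluation of those via the substitution $t=\pi\lambda_{\mathrm{M}}x^2$ into incomplete gamma functions. Your bookkeeping (including the coefficient identity $\tfrac{1}{2}-\tfrac{1}{2-\alpha_{\mathrm{M}}}=\tfrac{\alpha_{\mathrm{M}}}{2(\alpha_{\mathrm{M}}-2)}$ and the threshold $d_o$ for the SBS terms) matches the paper's assembly of \eqref{DRSP_AverEH_M_asymptotic}.
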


\begin{proof}
See Appendix~C.
\end{proof}

Overall, for a user in the massive MIMO aided HetNets with DRSP-based user association, its average harvested energy can be calculated as
{\begin{align}\label{downlink_hcn_energy_drsp}
\overline{\mathrm{E}}_{o,\mathrm{HetNet}}^{\mathrm{DRSP}} ={\Psi_\mathrm{M}^{\mathrm{DRSP}}} \overline{\mathrm{E}}_{o,\mathrm{M}}^{\mathrm{DRSP}} + \sum\limits_{k = 2}^K \Psi_k^\mathrm{DRSP}\overline{\mathrm{E}}_{o,k}^{\mathrm{DRSP}}.
\end{align}}

Similarly, for the case of URSP-based user association, the average harvested energy for a typical user that is associated with an MBS and that for a typical user that is associated with an SBS in the $k$-th tier are, respectively, given by
\begin{align}\label{URSP_AverEH_M}
\overline{\mathrm{E}}_{o,\mathrm{M}}^{\mathrm{URSP}}=\int_0^\infty \widetilde{\mathrm{E}}_{o,\mathrm{M}}^{\mathrm{URSP}}\left(x\right)  f_{\left| {{X_{o,{\mathrm{M}}}}} \right|}^\mathrm{URSP}(x) dx,
\end{align}
and
\begin{align}\label{URSP_AverEH_k}
\overline{\mathrm{E}}_{o,k}^{\mathrm{URSP}}=\int_0^\infty \widetilde{\mathrm{E}}_{o,k}^{\mathrm{URSP}}\left(y\right)  f_{\left| {{X_{o,k}}} \right|}^\mathrm{URSP}(y) dy,
\end{align}
where $\widetilde{\mathrm{E}}_{o,\mathrm{M}}^{\mathrm{URSP}}\left(x\right)$ and  $\widetilde{\mathrm{E}}_{o,k}^{\mathrm{URSP}}\left(y\right)$ are obtained by interchanging the parameters  ${\hat r}_{{\mathrm{M}}\mathrm{S}}\rightarrow {{\widetilde r}_\mathrm{MS}}$, ${\hat r}_{{\mathrm{S}}\mathrm{M}}\rightarrow {{\widetilde r}_\mathrm{SM}}$ and ${\hat r}_{{\mathrm{S}}\mathrm{S}} \rightarrow 1$ in \eqref{DRSP_con_AverEH_M} and \eqref{DRSP_con_AverEH_k}, respectively, $f_{\left| {{X_{o,{\mathrm{M}}}}} \right|}^\mathrm{URSP}(x)$ and $f_{\left| {{X_{o,k}}} \right|}^\mathrm{URSP}(y)$ are given by \eqref{Lemma-f-up-M} and \eqref{Lemma-f-up-S}, respectively.

\begin{corollary}
If the number of antennas at the MBS is large for URSP-based user association, then we obtain the asymptotic expression for $\overline{\mathrm{E}}_{o,{\mathrm{M}}}^{\mathrm{URSP}}$ by interchanging $ {\Psi _{{{\rm{M}}_\infty }}^{{\rm{DRSP}}}} \rightarrow \Psi_{\mathrm{M}_\infty}^ \mathrm{URSP} $ and ${\hat r}_{{\mathrm{M}}\mathrm{S}}\rightarrow {{\widetilde r}_\mathrm{MS}}$ in \eqref{DRSP_AverEH_M_asymptotic}.
\end{corollary}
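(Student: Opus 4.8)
The plan is to avoid any fresh integration and instead establish that $\overline{\mathrm{E}}_{o,\mathrm{M}}^{\mathrm{URSP}}$ in \eqref{URSP_AverEH_M} is the image, under one parameter substitution, of the DRSP quantity whose large-$N$ limit was already computed in \eqref{DRSP_AverEH_M_asymptotic}. Concretely, I would argue that every factor of the integrand of \eqref{URSP_AverEH_M} maps onto the corresponding DRSP factor under $\hat r_{\mathrm{MS}}\to\widetilde r_{\mathrm{MS}}$ and $\Psi_{\mathrm{M}}^{\mathrm{DRSP}}\to\Psi_{\mathrm{M}}^{\mathrm{URSP}}$, so that the entire asymptotic derivation of Appendix~C transcribes without change and its output is \eqref{DRSP_AverEH_M_asymptotic} with those two replacements applied.

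The first step is to line up the two MBS distance densities. Comparing \eqref{Lemma-f-up-M} with \eqref{DRSP_PDF_M}, the only differences are the cross-tier length scale ($\widetilde r_{\mathrm{MS}}$ in place of $\hat r_{\mathrm{MS}}$) and the normalizing association probability ($\Psi_{\mathrm{M}}^{\mathrm{URSP}}$ in place of $\Psi_{\mathrm{M}}^{\mathrm{DRSP}}$); the macrocell exponent $\exp(-\pi\lambda_{\mathrm{M}}x^2)$ and every power ${2\alpha_{\mathrm{M}}}/{\alpha_i}$ are identical. The second step is to invoke the fact, recorded immediately below \eqref{URSP_AverEH_k}, that $\widetilde{\mathrm{E}}_{o,\mathrm{M}}^{\mathrm{URSP}}(x)$ arises from \eqref{DRSP_con_AverEH_M} through the same replacement $\hat r_{\mathrm{MS}}\to\widetilde r_{\mathrm{MS}}$, with the crucial observation that the directed-plus-isotropic prefactor $(N+S-1)\tfrac{P_{\mathrm{M}}}{S}$ is untouched, since downlink WPT is performed by MRT regardless of the association rule. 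Hence both factors of the integrand carry exactly the single substitution, and re-running Appendix~C --- first-order Taylor expansion of the cross-tier exponential as $N\to\infty$, splitting of the indicators at the breakpoints $d$ and $d_o$, and evaluation of each piece via the incomplete-gamma integrals defining $\Xi_1(\cdot)$, $\Xi_2(\cdot,\cdot)$, $\Xi_3(\cdot,\cdot)$ in \eqref{asym_CDF_exp_M}--\eqref{E_a_b2_function} --- reproduces \eqref{DRSP_AverEH_M_asymptotic} with $\Psi_{\mathrm{M}_\infty}^{\mathrm{DRSP}}\to\Psi_{\mathrm{M}_\infty}^{\mathrm{URSP}}$ and $\hat r_{\mathrm{MS}}\to\widetilde r_{\mathrm{MS}}$ (the latter also inside $d_o$).

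What needs care, rather than a real obstacle, is confirming that $\hat r_{\mathrm{MS}}$ is the \emph{only} association-dependent length scale entering the macrocell integral. A short audit of \eqref{DRSP_PDF_M} and \eqref{DRSP_con_AverEH_M} shows that $\hat r_{\mathrm{SM}}$ and $\hat r_{\mathrm{SS}}$ appear exclusively in the small-cell objects \eqref{Lemma-f-down-S} and \eqref{DRSP_con_AverEH_k}, so for $\overline{\mathrm{E}}_{o,\mathrm{M}}$ a single interchange truly suffices. I would also verify that the first-order truncation of Appendix~C stays valid under the substitution: the discarded terms are $O\big(\widetilde r_{\mathrm{MS}}^{2}\big)=O\big((N-S+1)^{-2/\alpha_i}\big)$, vanishing as $N\to\infty$ at the same rate as the DRSP remainder $O\big(\hat r_{\mathrm{MS}}^{2}\big)$, so the identical asymptotic order legitimizes reusing the same approximation. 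With these two checks, the substitution rule of the corollary follows and no separate evaluation of \eqref{URSP_AverEH_M} is required.
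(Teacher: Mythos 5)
Your proposal is correct and follows essentially the same route the paper takes: the paper asserts Corollary~4 by analogy with Appendix~C, relying on the fact that the URSP density \eqref{Lemma-f-up-M} and conditional energy differ from their DRSP counterparts only through ${\hat r}_{\mathrm{MS}}\rightarrow{\widetilde r}_{\mathrm{MS}}$ and $\Psi_{\mathrm{M}}^{\mathrm{DRSP}}\rightarrow\Psi_{\mathrm{M}}^{\mathrm{URSP}}$, exactly as you argue. Your added checks --- that no other association-dependent length scale enters the macrocell integral and that ${\widetilde r}_{\mathrm{MS}}^{\,2}=O\bigl((N-S+1)^{-2/\alpha_i}\bigr)$ vanishes so the first-order Taylor truncation remains valid --- simply make explicit what the paper leaves implicit.
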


Overall, for a user in the massive MIMO aided HetNets with URSP-based user association, its average harvested energy is calculated as
\begin{align}\label{downlink_hcn_energy_ursp}
\overline{\mathrm{E}}_{o,\mathrm{HetNet}}^{\mathrm{URSP}} ={\Psi_\mathrm{M}^{\mathrm{URSP}}} \overline{\mathrm{E}}_{o,\mathrm{M}}^{\mathrm{URSP}} + \sum\limits_{k = 2}^K \Psi_k^\mathrm{URSP}\overline{\mathrm{E}}_{o,k}^{\mathrm{URSP}}.
\end{align}

\section{Uplink Performance Evaluation}\label{Section_uplink_rate}
{After harvesting the energy, users transmit their messages to the serving BSs with a stable transmit power constrained by \eqref{power_constr_1}.\footnote{{It is indicated from \eqref{power_constr_1} that the power transfer time allocation factor $\tau$ has to be large enough, in order to avoid the power outage.}}} In this section, we analyze the uplink WIT performance in terms of average achievable rate. On the one hand, given a specific user's transmit power, URSP-based user association outperforms the DRSP-based in the uplink by maximizing the uplink received signal power. On the other hand,  compared to URSP-based user association, DRSP-based user association allows users to set a higher stable transmit power due to more harvested energy. Thus, it is necessary to evaluate the uplink achievable rate under these two user association schemes.

We assume that each user intends to set the maximum stable transmit power to achieve the maximum achievable rate. For DRSP-based user association, the transmit power for user $i$ in a macrocell is $P_{u_i}^{\mathrm{DRSP}}=P_{u_\mathrm{M}}^{\mathrm{DRSP}}=\frac{\overline{\mathrm{E}}_{o,\mathrm{M}}^{\mathrm{DRSP}}}{\left(1-\tau\right) T}$, and the transmit power for user $j$ in a small cell of the $k$-th tier is $P_{u_j}^{\mathrm{DRSP}}=P_{u_k}^{\mathrm{DRSP}}=\frac{\overline{\mathrm{E}}_{o,k}^{\mathrm{DRSP}}}{\left(1-\tau\right) T}$, where $\overline{\mathrm{E}}_{o,\mathrm{M}}^{\mathrm{DRSP}}$ and $\overline{\mathrm{E}}_{o,k}^{\mathrm{DRSP}}$ are given by \eqref{DRSP_AverEH_M} and \eqref{DRSP_AverEH_k}, respectively. For URSP-based user association, the transmit power for user $i$ in a macrocell is $P_{u_i}^{\mathrm{URSP}}=P_{u_\mathrm{M}}^{\mathrm{URSP}}=\frac{\overline{\mathrm{E}}_{o,\mathrm{M}}^{\mathrm{URSP}}}{\left(1-\tau\right) T}$, and the transmit power for user $j$ in a small cell of the $k$-th tier is $P_{u_j}^{\mathrm{URSP}}=P_{u_k}^{\mathrm{URSP}}=\frac{\overline{\mathrm{E}}_{o,k}^{\mathrm{URSP}}}{\left(1-\tau\right) T}$, in which $\overline{\mathrm{E}}_{o,\mathrm{M}}^{\mathrm{URSP}}$ and $\overline{\mathrm{E}}_{o,k}^{\mathrm{URSP}}$ are given by \eqref{URSP_AverEH_M} and \eqref{URSP_AverEH_k}, respectively.

\subsection{{Average Uplink Achievable Rate}}
 We first present the achievable rate for the massive MIMO HetNet uplink with DRSP-based user association and have the following theorems.

\begin{theorem}
Given a distance $\left|X_{o,\mathrm{M}}\right|=x$, a tractable lower bound for the conditional average uplink achievable rate between a typical user and its serving MBS can be found as
\begin{align}\label{theo_2}
R_{\mathrm{DRSP},\mathrm{M}}^\mathrm{low}\left(x\right)=
\left(1-\tau\right)\log_2
 \left(1+{P_{{u_\mathrm{M}}}^{\mathrm{DRSP}}}{\left(N-S+1\right)}\frac{\Delta_1\left(x\right)}{\Lambda_\mathrm{DRSP}}\right),
\end{align}
where $\Delta_1\left(x\right)=\beta\left({\rm{\mathbf{1}}}\left( x \leq d \right)d^{-\alpha _{\rm M}}+{\rm{\mathbf{1}}}\left( x > d \right)x^{-\alpha _{\rm M}}\right)$ and
\begin{align}
\Lambda _{\rm DRSP} =2\pi\beta  \left( {P_{{u_{\rm M}}}^{\rm DRSP} (S{\lambda _{\rm M}}) + \sum\limits_{i = 2}^K {P_{{u_i}}^{\rm DRSP}}  {\lambda _i}} \right)
\times \left( {\frac{{{d^{2 - {\alpha _{\rm M}}}}}}{2} + \frac{{{d^{2 - {\alpha _{\rm M}}}}}}{{{\alpha _{\rm M}} - 2}}} \right) + {\delta ^2}.
\end{align}
\end{theorem}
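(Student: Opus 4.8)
The plan is to start from the definition of the conditional ergodic rate,
\begin{equation}
R_{\mathrm{DRSP},\mathrm{M}}(x)=(1-\tau)\,\mathbb{E}\!\left[\log_2\!\left(1+\mathrm{SINR}_\mathrm{M}\right)\;\middle|\;\left|X_{o,\mathrm{M}}\right|=x\right],
\end{equation}
with $\mathrm{SINR}_\mathrm{M}$ as in \eqref{SINR_Macro}, and to obtain \eqref{theo_2} by one application of Jensen's inequality followed by a mean-interference evaluation. Since $z\mapsto\log_2(1+1/z)$ is convex on $z>0$, setting $z=1/\mathrm{SINR}_\mathrm{M}$ and applying Jensen yields
\begin{equation}
\mathbb{E}\!\left[\log_2\!\left(1+\mathrm{SINR}_\mathrm{M}\right)\right]\ge\log_2\!\left(1+\frac{1}{\mathbb{E}\!\left[1/\mathrm{SINR}_\mathrm{M}\right]}\right),
\end{equation}
so the whole task reduces to computing $\mathbb{E}[1/\mathrm{SINR}_\mathrm{M}\mid x]$ in closed form and reinstating the prefactor $(1-\tau)$.

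Next I would exploit independence. Conditioned on $x$, the desired effective gain $h_{o,\mathrm{M}}\sim\Gamma(N-S+1,1)$ is independent of the interference sums $I_{u,\mathrm{M}},I_{u,\mathrm{S}}$ and of $\delta^2$, so
\begin{equation}
\mathbb{E}\!\left[\frac{1}{\mathrm{SINR}_\mathrm{M}}\right]=\frac{\mathbb{E}\!\left[I_{u,\mathrm{M}}+I_{u,\mathrm{S}}\right]+\delta^2}{P_{u_\mathrm{M}}^{\mathrm{DRSP}}\,\Delta_1(x)}\cdot\kappa,
\end{equation}
where $\Delta_1(x)=L(\max\{x,d\})=\beta\bigl(\mathbf{1}(x\le d)d^{-\alpha_\mathrm{M}}+\mathbf{1}(x>d)x^{-\alpha_\mathrm{M}}\bigr)$ is the desired path loss and $\kappa$ is the relevant moment of $h_{o,\mathrm{M}}$. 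The factor $(N-S+1)$ in \eqref{theo_2} is exactly the mean $\mathbb{E}[h_{o,\mathrm{M}}]=N-S+1$: invoking channel hardening, i.e. $\mathbb{E}[1/h_{o,\mathrm{M}}]\approx 1/\mathbb{E}[h_{o,\mathrm{M}}]$ for large $N$, gives $\kappa=1/(N-S+1)$ and hence $1/\mathbb{E}[1/\mathrm{SINR}_\mathrm{M}]=P_{u_\mathrm{M}}^{\mathrm{DRSP}}(N-S+1)\Delta_1(x)/\Lambda_\mathrm{DRSP}$, matching the statement.

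The substantive computation is the mean aggregate interference. Modelling the active interfering macro users and $i$-tier users as homogeneous PPPs of densities $S\lambda_\mathrm{M}$ (each MBS serves $S$ users) and $\lambda_i$, respectively, and using $\mathbb{E}[h_i]=\mathbb{E}[h_j]=1$, Campbell's theorem produces for each class a term proportional to $2\pi\lambda\int_0^\infty\beta(\max\{r,d\})^{-\alpha_\mathrm{M}}r\,dr$. Splitting the integral at the reference distance $d$,
\begin{equation}
\int_0^\infty\beta(\max\{r,d\})^{-\alpha_\mathrm{M}}r\,dr=\beta\!\left(\frac{d^{2-\alpha_\mathrm{M}}}{2}+\frac{d^{2-\alpha_\mathrm{M}}}{\alpha_\mathrm{M}-2}\right),\qquad\alpha_\mathrm{M}>2,
\end{equation}
so weighting the macro class by $P_{u_\mathrm{M}}^{\mathrm{DRSP}}S\lambda_\mathrm{M}$, each small-cell class by $P_{u_i}^{\mathrm{DRSP}}\lambda_i$, and adding $\delta^2$ assembles precisely $\Lambda_\mathrm{DRSP}$. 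Substituting back into the Jensen bound and multiplying by $(1-\tau)$ gives \eqref{theo_2}.

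The main obstacle is the interference step rather than the algebra. One must justify treating the set of active interferers as a PPP with the stated densities (the short-range path-loss cutoff at $d$ is what makes the Campbell integral converge and keeps the mean received power finite), and one must note that every interferer reaches the serving MBS through the macro law $\beta(\cdot)^{-\alpha_\mathrm{M}}$, which is why the single bracket $\bigl(\tfrac{d^{2-\alpha_\mathrm{M}}}{2}+\tfrac{d^{2-\alpha_\mathrm{M}}}{\alpha_\mathrm{M}-2}\bigr)$ is common to both interference classes. A secondary delicate point is the desired-signal gain: the clean $(N-S+1)$ numerator relies on the large-$N$ hardening of $h_{o,\mathrm{M}}$, which is exactly the regime of interest here.
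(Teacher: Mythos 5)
Your proposal is correct and follows essentially the same route as the paper's Appendix D: Jensen's inequality applied to $\log_2(1+1/z)$, replacement of the desired gain $h_{o,\mathrm{M}}\sim\Gamma(N-S+1,1)$ by its mean $N-S+1$ via a large-$N$ hardening argument (the paper invokes the law of large numbers at the same step), and Campbell's theorem with interferer densities $S\lambda_\mathrm{M}$ and $\lambda_i$ and the common macro path-loss law $\beta(\max\{r,d\})^{-\alpha_\mathrm{M}}$ to assemble $\Lambda_\mathrm{DRSP}$. No substantive differences.
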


\begin{proof}
See Appendix D.
\end{proof}

\begin{theorem}
Given a distance $\left|X_{o,k}\right|=y$, the conditional average uplink achievable rate between a typical user and its serving SBS in the $k$-th tier is given by
\begin{equation}\label{aver_rate_uplink}
R_{\mathrm{DRSP},k}\left(y\right)=\frac{\left(1-\tau\right)}{\ln 2}\int_0^\infty \frac{{\bar{F}}_{\mathrm{SINR}}\left(x\right)}{1+x} dx,
\end{equation}
where
\begin{equation}
{\bar{F}}_{\mathrm{SINR}}\left(x\right)=e^{-\frac{x {\delta ^2}}{P_{u_k}^{\mathrm{DRSP}}\Delta_2\left(y\right)}-\Omega\left(\frac{x }{P_{u_k}^{\mathrm{DRSP}}\Delta_2\left(y\right)}\right)}
\end{equation}
is the complementary cumulative distribution function (CCDF) of the received SINR, in which
\begin{equation}
\Delta_2\left(y\right)=\beta\left({\rm{\mathbf{1}}}\left( y \leq d \right)d^{-\alpha _{k}}+{\rm{\mathbf{1}}}\left( y > d \right)x^{-\alpha _{k}}\right),
\end{equation}
and $\Omega\left(\cdot\right)$ is given by \eqref{CCDF_tier_k} (see next page). In \eqref{CCDF_tier_k}, $_2{F_1}\left[\cdot,\cdot;\cdot;\cdot\right]$ is the Gauss hypergeometric function~\cite[(9.142)]{gradshteyn}.
\end{theorem}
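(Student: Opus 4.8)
The plan is to recognise that $R_{\mathrm{DRSP},k}(y)=(1-\tau)\,\E\!\left[\log_2\!\left(1+{\rm SINR}_k\right)\right]$, where the conditional expectation is taken over the desired-link fading, the interfering fading, and the interfering point processes $\widetilde{\mathcal{U}}_\mathrm{M}$ and $\widetilde{\mathcal{U}}_i$ with the link distance $y=\left|X_{o,k}\right|$ held fixed, and then to reduce the whole computation to the CCDF of ${\rm SINR}_k$. First I would invoke the standard identity that, for any non-negative random variable $Z$, $\E[\ln(1+Z)]=\int_0^\infty \frac{\bar F_Z(x)}{1+x}\,dx$, which follows from $\ln(1+Z)=\int_0^\infty \frac{\mathbf{1}(t<Z)}{1+t}\,dt$ and Fubini's theorem. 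Converting to base two and multiplying by $(1-\tau)$ yields \eqref{aver_rate_uplink} directly, so the entire content of the theorem is the closed form of $\bar F_{\mathrm{SINR}}(x)=\Pr({\rm SINR}_k>x)$.

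To obtain the CCDF I would exploit that the desired small-cell gain satisfies $g_{o,k}\sim\exp(1)$. Writing $\Delta_2(y)=L(\max\{y,d\})$ and conditioning on the aggregate interference, the event $\{{\rm SINR}_k>x\}$ from \eqref{SINR_Small} becomes $\{g_{o,k}>s\,(I_{u,\mathrm{M}}+I_{u,\mathrm{S}}+\delta^2)\}$ with $s=\frac{x}{P_{u_k}^{\mathrm{DRSP}}\Delta_2(y)}$. Taking the expectation over the exponential $g_{o,k}$ first produces the noise factor $e^{-s\delta^2}$ multiplied by the Laplace transforms of the two interference terms; since $I_{u,\mathrm{M}}$ and $I_{u,\mathrm{S}}$ are driven by independent point processes, the expectation factorises as $\bar F_{\mathrm{SINR}}(x)=e^{-s\delta^2}\,\mathcal{L}_{I_{u,\mathrm{M}}}(s)\,\mathcal{L}_{I_{u,\mathrm{S}}}(s)$. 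Setting $\Omega(s)=-\ln\mathcal{L}_{I_{u,\mathrm{M}}}(s)-\ln\mathcal{L}_{I_{u,\mathrm{S}}}(s)$ recovers exactly the exponent in the stated $\bar F_{\mathrm{SINR}}$.

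The remaining work is to evaluate the two Laplace transforms via the probability generating functional of the PPP. Modelling the interfering macro users as a PPP of density $S\lambda_\mathrm{M}$ with power $P_{u_\mathrm{M}}^{\mathrm{DRSP}}$, and the $i$-th-tier interfering small-cell users as a PPP of density $\lambda_i$ with power $P_{u_i}^{\mathrm{DRSP}}$, and using $\E_g[e^{-sPgL}]=(1+sPL)^{-1}$ for the $\exp(1)$ interfering gains, each transform takes the form $\exp\!\big(-2\pi\lambda\int_0^\infty \frac{sPL(\max\{r,d\})}{1+sPL(\max\{r,d\})}\,r\,dr\big)$. I would split the radial integral at $r=d$: on $r\le d$ the path loss is the constant $\beta d^{-\alpha}$, giving an elementary contribution, while on $r>d$ the standard change of variables that maps $\int_d^\infty \frac{r}{1+(sP\beta)^{-1}r^{\alpha}}\,dr$ into the Gauss hypergeometric form ${}_2F_1$ produces the terms appearing in \eqref{CCDF_tier_k}. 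Adding the macro contribution to the $\sum_{i=2}^K$ over the small-cell tiers assembles $\Omega$.

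The main obstacle I anticipate is the $r>d$ tail integral: getting the arguments of ${}_2F_1$ correct while tracking the distinct exponents $\alpha_\mathrm{M}$ and $\alpha_i$ and the per-tier powers, and handling the reference-distance cutoff so that the near-field region $r\le d$ contributes its separate polynomial term instead of diverging. A secondary point to state carefully is the modelling assumption that $\widetilde{\mathcal{U}}_\mathrm{M}$ and $\widetilde{\mathcal{U}}_i$ are treated as independent homogeneous PPPs with the densities above, which is precisely what makes the PGFL evaluation and the product factorisation of $\bar F_{\mathrm{SINR}}$ valid; I would flag this approximation explicitly.
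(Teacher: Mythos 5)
Your proposal is correct and follows essentially the same route as the paper's Appendix E: the layer-cake identity reduces the rate to the CCDF, the $\exp(1)$ desired gain turns the CCDF into $e^{-s\delta^2}$ times the product of the two interference Laplace transforms, and the PGFL with the reference-distance cutoff (split at $r=d$) yields the ${}_2F_1$ terms in $\Omega$. You also correctly identify the paper's key approximation, namely treating the interfering macro and small-cell users as independent homogeneous PPPs of densities $S\lambda_\mathrm{M}$ and $\lambda_i$ with no exclusion region around the typical BS.
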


\begin{figure*}[!t]
\normalsize
\begin{multline}\label{CCDF_tier_k}
\Omega\left(s\right)= \pi (S\lambda_\mathrm{M}) \frac{s P_{u_\mathrm{M}}^{\mathrm{DRSP}}\beta d^{-\alpha_i}}{1+s P_{u_\mathrm{M}}^{\mathrm{DRSP}} \beta d^{-\alpha_i}} d^2 +2\pi(S\lambda_\mathrm{M}) s P_{u_\mathrm{M}}^{\mathrm{DRSP}} \beta  \frac{ d^{2-\alpha_i}}{\alpha_i-2} {}_2{F_1}\left[1,\frac{{{\alpha_i} - 2}}{{{\alpha _i}}};2 - \frac{2}{{{\alpha _i}}}; - s P_{u_\mathrm{M}}^{\mathrm{DRSP}}\beta {{d}^{ - {\alpha _i}}} \right] \\
+\sum\limits_{i = 2}^K \pi \lambda_i  \frac{s P_{u_i}^{\mathrm{DRSP}}\beta d^{-\alpha_i}}{1+s P_{u_i}^{\mathrm{DRSP}} \beta d^{-\alpha_i}} d^2 +\sum\limits_{i = 2}^K 2\pi\lambda_i s P_{u_i}^{\mathrm{DRSP}} \beta \frac{ d^{2-\alpha_i}}{\alpha_i-2} {}_2{F_1}
\left[1,\frac{{{\alpha _i} - 2}}{{{\alpha _i}}};2 - \frac{2}{{{\alpha _i}}}; - s P_{u_i}^{\mathrm{DRSP}}\beta {{d}^{ - {\alpha _i}}} \right]
\end{multline}
\hrulefill %\vspace*{0pt}
\end{figure*}

\begin{proof}
See Appendix E.
\end{proof}

With the help of Theorem 2 and Theorem 3, the lower bound for the average uplink achievable rate between a typical user and its serving MBS can be expressed as
\begin{align}\label{aver_up_rate_DRSP}
{\overline R}_{\mathrm{DRSP},\mathrm{M}}^\mathrm{low}=\int_0^\infty{R_{\mathrm{DRSP},\mathrm{M}}^\mathrm{low}\left(x\right)
 f_{\left| {{X_{o,{\mathrm{M}}}}} \right|}^\mathrm{DRSP}(x) dx},
\end{align}
and the average uplink achievable rate between a typical user and its serving SBS in the $k$-th tier is given by
\begin{align}\label{aver_rate_uplink}
{\overline{R}}_{\mathrm{DRSP},k}=\int_0^\infty  R_{\mathrm{DRSP},k} \left(y\right)  f_{\left| {{X_{o,k}}} \right|}^\mathrm{DRSP}(y) dy.
\end{align}

Overall, a lower bound on the average uplink achievable rate for a user in the massive MIMO aided HetNets with DRSP-based user association is calculated as
\begin{align}\label{HCN_uplink_rate_drsp}
{\overline R}_{\mathrm{DRSP},\mathrm{HetNet}}^\mathrm{low}={\Psi_\mathrm{M}^{\mathrm{DRSP}}} {\overline R}_{\mathrm{DRSP},\mathrm{M}}^\mathrm{low} + \sum\limits_{k = 2}^K \Psi_k^\mathrm{DRSP}{\overline{R}}_{\mathrm{DRSP},k}.
\end{align}

For URSP-based user association, the lower bound for the average uplink achievable rate between a typical user and its serving MBS ${\overline R}_{\mathrm{URSP},\mathrm{M}}^\mathrm{low} $ can be directly determined by interchanging the transmit power parameters $P_{u_\mathrm{M}}^{\mathrm{DRSP}}\rightarrow P_{u_\mathrm{M}}^{\mathrm{URSP}}$, $P_{u_i}^{\mathrm{DRSP}}\rightarrow P_{u_i}^{\mathrm{URSP}}$, and the PDF $f_{\left| {{X_{o,{\mathrm{M}}}}} \right|}^\mathrm{DRSP}(x) \rightarrow f_{\left| {{X_{o,{\mathrm{M}}}}} \right|}^\mathrm{URSP}(x)$  in \eqref{aver_up_rate_DRSP}, and the average uplink achievable rate between a typical user and its serving SBS in the $k$-th tier ${\overline{R}}_{\mathrm{URSP},k}$ is obtained by interchanging the transmit power parameters $P_{u_\mathrm{M}}^{\mathrm{DRSP}}\rightarrow P_{u_\mathrm{M}}^{\mathrm{URSP}}$, $P_{u_i}^{\mathrm{DRSP}}\rightarrow P_{u_i}^{\mathrm{URSP}}$, and the PDF $f_{\left| {{X_{o,k}}} \right|}^\mathrm{DRSP}(y) \rightarrow f_{\left| {{X_{o,k}}} \right|}^\mathrm{URSP}(y)$ in \eqref{aver_rate_uplink}. As such, a lower bound on the average uplink achievable rate for a user in the massive MIMO aided HetNets with URSP-based user association is obtained as
\begin{align}\label{HCN_uplink_rate_Ursp}
{\overline R}_{\mathrm{URSP},\mathrm{HetNet}}^\mathrm{low}={\Psi_\mathrm{M}^{\mathrm{URSP}}} {\overline R}_{\mathrm{URSP},\mathrm{M}}^\mathrm{low} + \sum\limits_{k = 2}^K \Psi_k^\mathrm{URSP}{\overline{R}}_{\mathrm{URSP},k}.
\end{align}

%\subsection{\textbf{Noise-limited Scenario}}
%
%\begin{figure*}[!t]
%\normalsize
%\begin{align}\label{CCDF_tier_k}
%{R_{HCN}^{{\text{DRSP}}}}_{\text{SNR}}& = S*R_{{\text{DRSP}},{\text{M}}}^{{\text{low}}}\left( x \right)\Psi _{\text{M}}^{{\text{DRSP}}} + {R_{{\text{DRSP}},k}}\left( y \right)\Psi _k^{{\text{DRSP}}}\nonumber\\
%&= \left( {1 - \tau } \right)\left[ {\Psi _{\text{M}}^{{\text{DRSP}}}S*{{\log }_2}\left( {1 + P_{{u_{\text{M}}}}^{{\text{DRSP}}}\left( {N - S + 1} \right)\frac{{{\Delta _1}\left( x \right)}}{{{\delta ^2}}}} \right) + \frac{{\Psi _k^{{\text{DRSP}}}}}{{\ln 2}}\int_0^\infty  {\frac{{{e^{ - \frac{{x{\delta ^2}}}{{P_{{u_k}}^{{\text{DRSP}}}{\Delta _2}\left( y \right)}}}}}}{{1 + x}}} dx} \right]
%\end{align}
%\hrulefill \vspace*{0pt}
%\end{figure*}

\section{Numerical Results}
In this section, we present numerical results to examine the impact of different user association schemes and key system parameters on the harvested energy and the uplink achievable rate. We consider a two-tier HetNet consisting of macrocells and picocells. The network is assumed to operate at $f_c=1$ GHz ( $f_c$ is the carrier frequency); the bandwidth (BW) is assumed $10{\rm MHz}$, the density of MBSs is $\lambda_\mathrm{M} = 10^{-3} $ m$^{-2}$ {\footnote{{So far, the number of massive MIMO enabled BSs deployed in the future 5G networks has not been standardized yet.}}}; the density of pico BSs (PBSs) $\lambda_2$ is proportional to $\lambda_\mathrm{M}$; the MBS's transmit power is $ P_\mathrm{M}= 46$ dBm; the noise figure is $\mathrm{Nf}= 10$ dB, the noise power is $ \sigma^2  =-170+10\log_{10}(\mathrm{BW})+ \mathrm{Nf} =-90 $ dBm; {the frequency dependent value $\beta={(\frac{{\text{c}}}{{4\pi {f_c}}})^2}$ with $c=3 \times 10^8 \rm m/s$}; the reference distance $d=1$; and the energy conversion efficiency is $\eta=0.9$. {{{Note that varying the energy conversion efficiency only scales the resulting figures~\cite{S_A_H_WPT_2015}.}}}  In the figures, Monte Carlo simulations are marked with `$\circ$'.

\subsection{User Association}
%In this subsection, we examine the effect of DRSP-based and URSP-based user association schemes on the cell load.

\begin{figure}
\centering
\includegraphics[height=3.5in,width=4.5in]{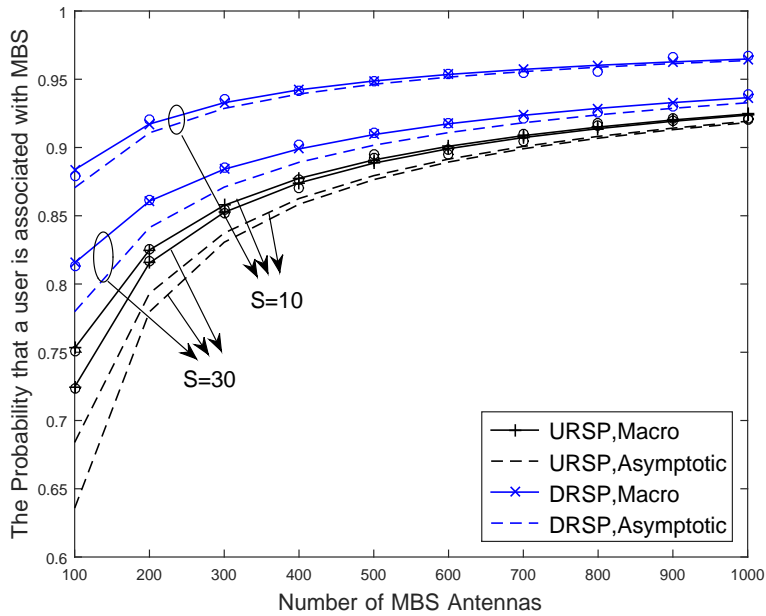}
\caption{Association probability versus the number of antennas for the MBS.}\label{fig:Pro}
\end{figure}

Results in Fig.~\ref{fig:Pro} are provided for the association probability that a user is associated with MBS for various number of MBS antennas. In the results, the path loss exponents were set to $ \alpha_\mathrm{M}= 3.5$, $ \alpha_2  = 4$, and $ \lambda_2=5 \times \lambda_\mathrm{M}$. The solid curves are obtained from \eqref{M_Pr} and \eqref{U_M_Pr} for the DRSP-based and URSP-based user association schemes, respectively, and the dash curves are obtained from the corresponding \eqref{M_Pr_asym} and \eqref{U_M_Pr_asym}, respectively. As we see, our asymptotic expressions can well approximate the exact ones. Also, compared to the URSP-based user association, users are more likely to be served in the macrocells by using DRSP-based user association. The reason is that for DRSP-based user association, MBS provides larger received power. The probability that a user is associated with an MBS increases with the number of MBS antennas, due to the increase of power gain. By increasing $S$, the probability that a user is served by an MBS is reduced due to the decrease of MBS transmit power allocated to each user $\left(\frac{P_\mathrm{M}}{S}\right)$.

\subsection{Downlink Energy Harvesting}
In this subsection, we investigate the energy harvesting performance for different user association schemes presented in Section~\ref{section_WPT}. In the simulations, the block time $T$ is normalized to $1$, while the time allocation factor is $\tau=0.6$, and the path loss exponents are $\alpha_\mathrm{M}= 3$  and $\alpha_2 = 3.5$.

{Fig.~\ref{fig:up-energy_sources} shows the average energy harvested from the directed WPT, isotropic  WPT, and ambient RF for a user associated with MBS based on the DRSP-based user association. The PBS transmit power is $ P_2= 30$ dBm, the density of PBSs is $ \lambda_2=20 \times \lambda_\mathrm{M}$, and $S=20$. We observe that compared to isotropic  WPT and ambient RF,  the directed WPT plays a dominate role in harvesting energy. The average energy harvested from the directed WPT increases with the number of antennas, due to more power gains. The amount of harvested energy from the ambient RF is nearly unaltered when increasing the MBS antennas. However, the average energy harvested from the isotropic WPT slightly decreases with MBS antennas. The reason is that the coverage of the macrocell is expanded by adding more MBS antennas, and the distance between a user and its associated MBS becomes larger on average, which has an adverse effect on the isotropic WPT.}

\begin{figure}
  \centering
\includegraphics[height=3.5in,width=4.5in]{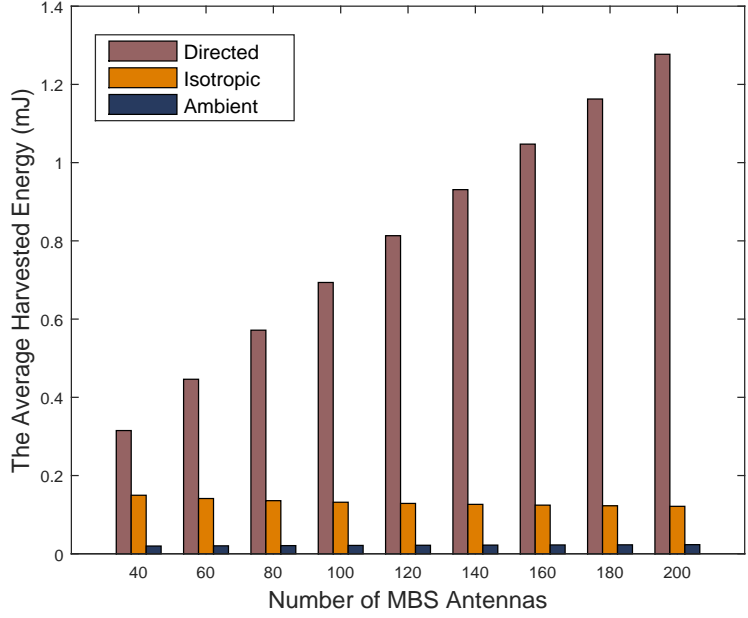}
   \caption{The average harvested energy  against the number of antennas.}\label{fig:up-energy_sources}
\end{figure}

\begin{figure}
  \centering
\includegraphics[height=3.5in,width=4.5in]{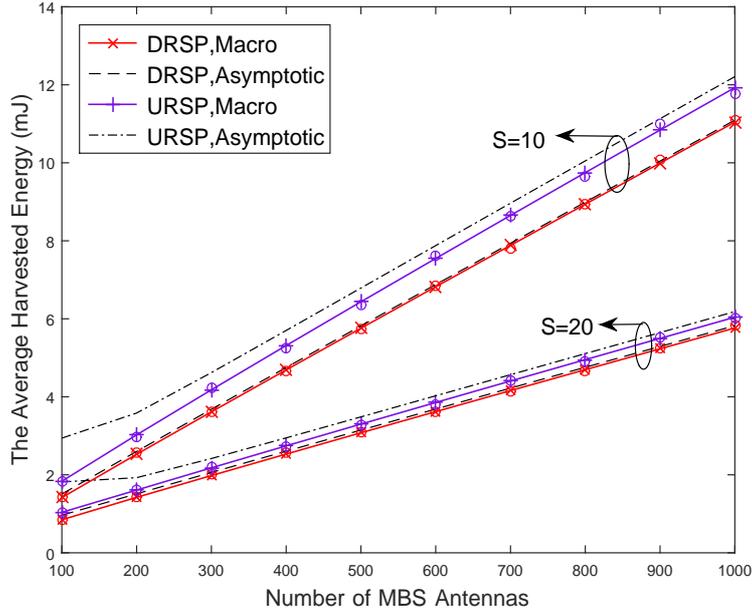}
   \caption{The average harvested energy against the number of antennas for the MBS.}\label{fig:density_mbs}
\end{figure}

Fig.~\ref{fig:density_mbs} shows the average harvested energy of a user associated with the MBS versus the number of MBS antennas. The PBS transmit power is $ P_2= 30{\rm dBm}$ and the density of PBSs is $ \lambda_2=20 \times \lambda_\mathrm{M}$. The solid curves are obtained from \eqref{DRSP_AverEH_M} and \eqref{URSP_AverEH_M}, while the dash curves are obtained from \eqref{DRSP_AverEH_M_asymptotic} and Corollary 4. We see that the asymptotic expressions can well predict the exact ones. The average harvested energy increases with the number of MBS antennas, but decreases with the number of users served by one MBS. {This is because the power gain obtained by the user increases with the number of antennas, but the directed power transfer time allocated to each user decreases with the number of users served by the MBS.} In addition, by URSP-based user association, user in the macrocell harvests more energy than in the case of the DRSP-based user association. { The reason is that with DRSP-based user association, more users with low received power are loaded to the macrocells with increasing number of the MBS antennas.}

\begin{figure}
  \centering
\includegraphics[height=3.5in,width=4.5in]{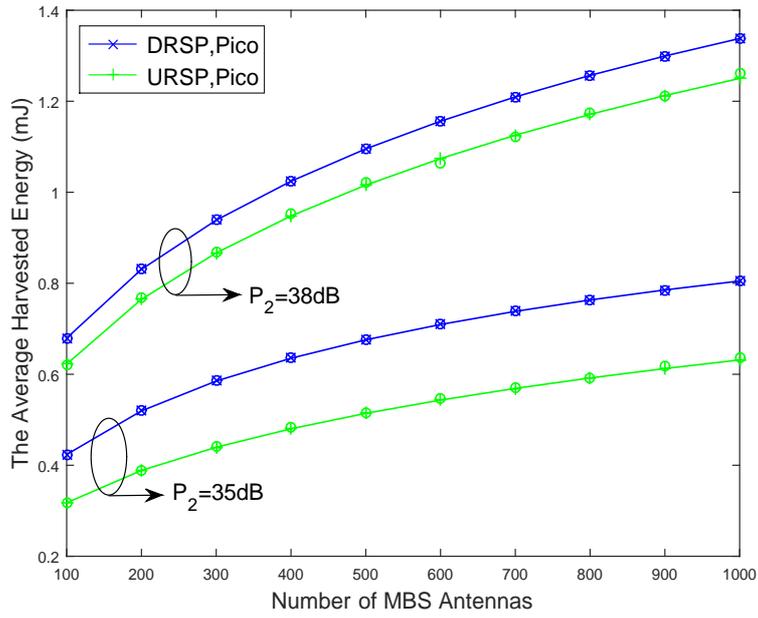}
   \caption{The average harvested energy against the number of antennas for the PBS.}\label{fig:density_pbs}
\end{figure}

Fig.~\ref{fig:density_pbs} shows the average harvested energy of a user associated with the PBS versus the number of MBS antennas.  Here we set $ \lambda_2=20 \times \lambda_\mathrm{M}$ and $ S =5$.  The solid curves are obtained from \eqref{DRSP_AverEH_k} and \eqref{URSP_AverEH_k}. We observe that the harvested energy increases with the number of MBS antennas, due to the fact that users with higher received power are connected to the picocells. Evidently, increasing the PBS transmit power brings an increase on the harvested energy. Moreover, the DRSP based user association outperforms the URSP-based one, since users loaded to the picocells have higher received power through DRSP based user association.

\begin{figure}
  \centering
\includegraphics[height=3.5in,width=4.5in]{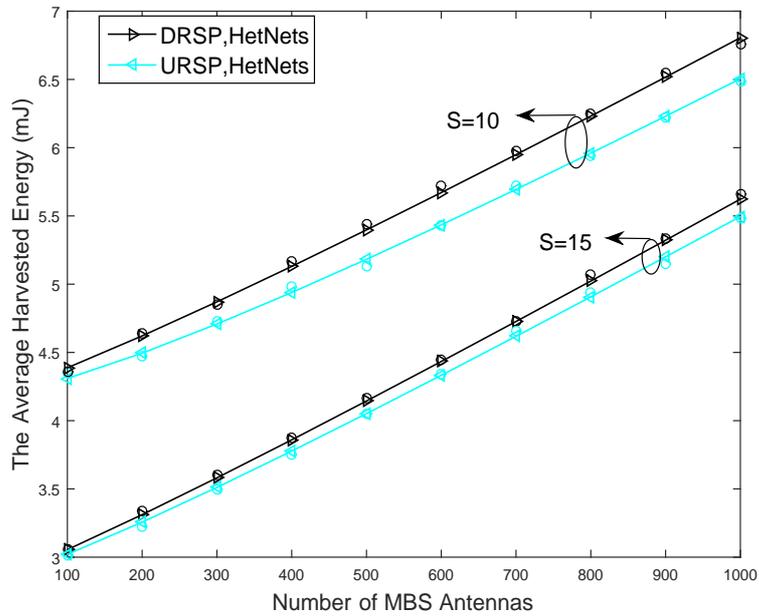}
   \caption{The average harvested energy against the number of antennas in the massive MIMO HetNet.}\label{fig:density_hcn}
\end{figure}

Fig.~\ref{fig:density_hcn} provides the results for the average harvested energy of a user in the massive MIMO HetNet. Same as before, the solid curves are obtained from \eqref{downlink_hcn_energy_drsp} and \eqref{downlink_hcn_energy_ursp}. It is observed that overall, DRSP-based user association harvests more energy than the URSP-based method, since DRSP-based user association seeks to maximize the received power for a user in the HetNet. {In addition, serving more users in the macrocells decreases the harvested energy due to the shorter  directed power transfer time allocated to each user.}

\begin{figure}
\centering
\includegraphics[height=3.5in,width=4.5in]{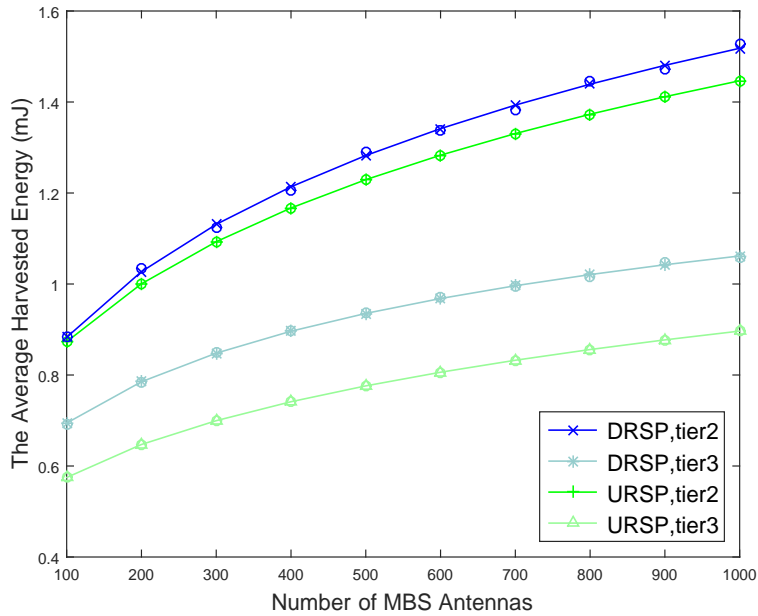}
  \caption{The average harvested energy  against the number of antennas in a three-tier massive MIMO HetNet.}
\label{fig:tauChange1}
\end{figure}

{Fig~\ref{fig:tauChange1} shows the average harvested energy of a user in a three-tier massive MIMO HetNet. In the second and third tier, the densities of BSs are $\lambda _2=20 \lambda_\mathrm{M}$ and $\lambda _3=30 \lambda_\mathrm{M}$, and the BS transmit power are  $P_2=38$ dBm, $P_3=35$ dBm, respectively. We find that compared to the results in Fig. 4, adding another tier can increase the harvested energy of other tiers, because the distances between the BSs and users are shortened. In addition, when adding the number of MBS antennas, the average harvested energy of a user in the second and third tier increases due to the fact that users with low received power are offloaded to macrocells. }

\subsection{Average Uplink Achievable Rate}
% in terms of user's maximum stable transmit power $P_\mathrm{t}$, which is calculated as
% \begin{align}\label{P_t_max}
% P_\mathrm{t}= \frac{\overline{\mathrm{E}}_{o}}{\left(1-\tau\right) T},
%\end{align}
% where
In this section, we evaluate the average achievable rate in the uplink, as presented in Section \ref{Section_uplink_rate}. In the simulations, the time allocation factor is $\tau=0.3$, and the path loss exponents are $ \alpha_\mathrm{M}= 2.8$  and $ \alpha_2  = 2.5$, $ P_2= 30{\rm dBm}$ and  $ S= 10$.

\begin{figure}
  \centering
\includegraphics[height=3.5in,width=4.5in]{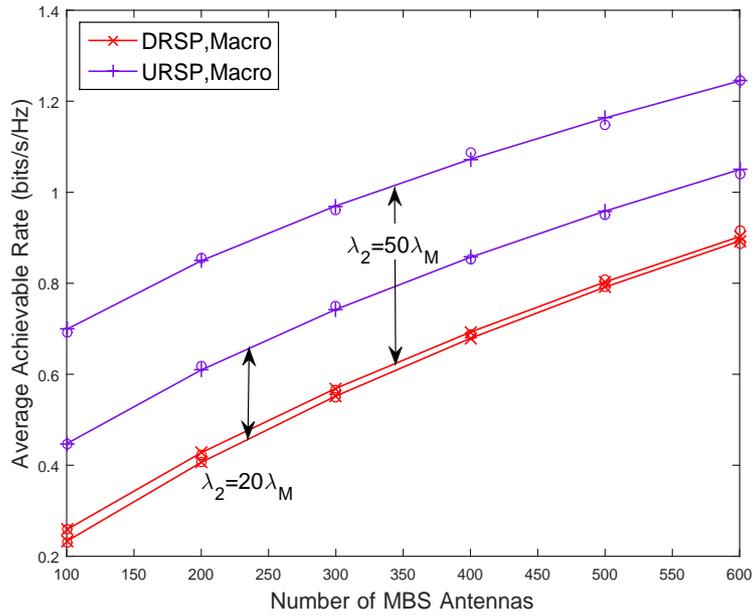}
   \caption{The average uplink achievable rate against the number of antennas for the MBS.}\label{fig:UPLINK_mbs_den}
\end{figure}
\begin{figure}
  \centering
\includegraphics[height=3.5in,width=4.5in]{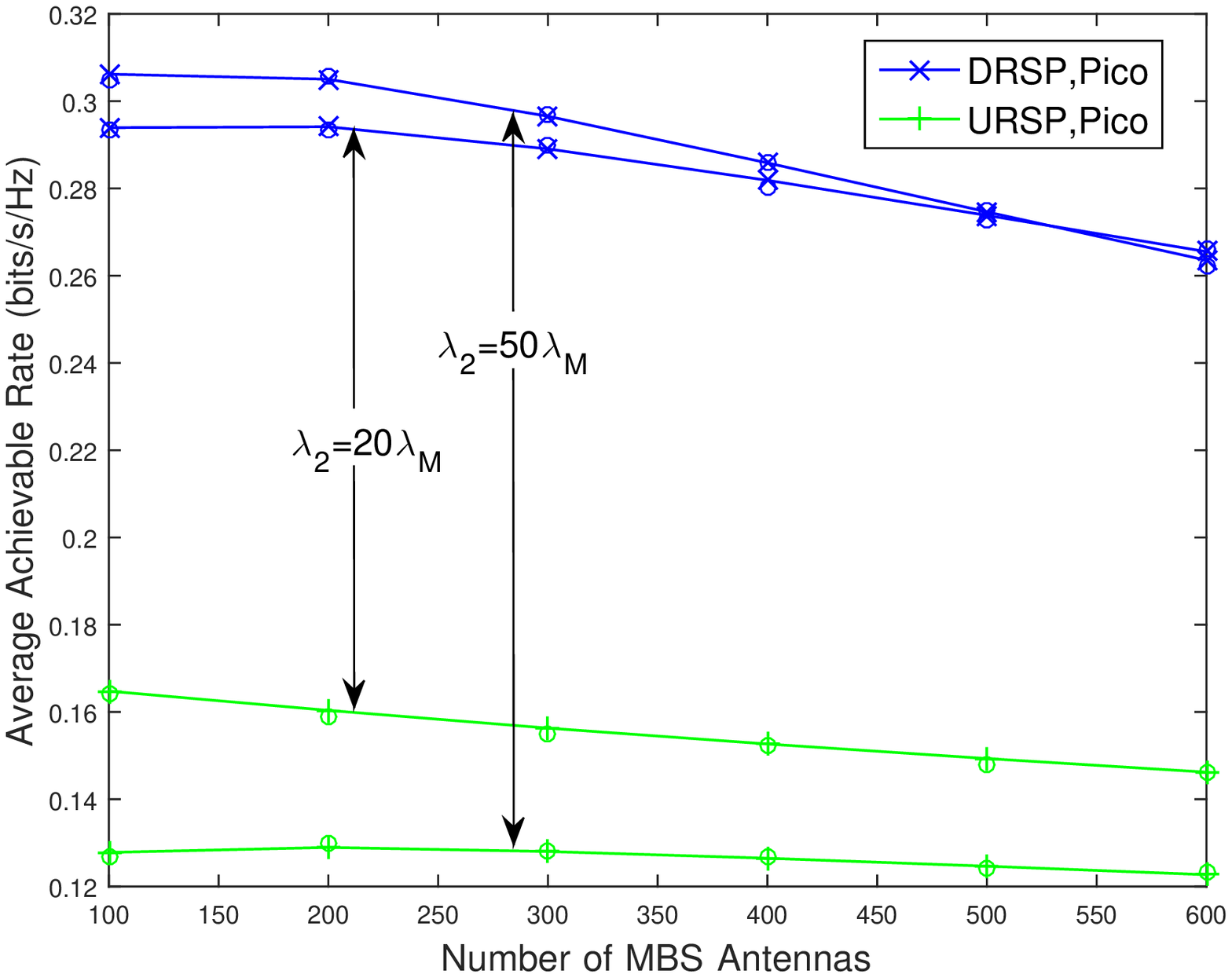}
   \caption{{The average uplink achievable rate against the number of MBS antennas.}}\label{fig:UPLINK_pbs_den}
\end{figure}

\begin{figure}
  \centering
\includegraphics[height=3.5in,width=4.5in]{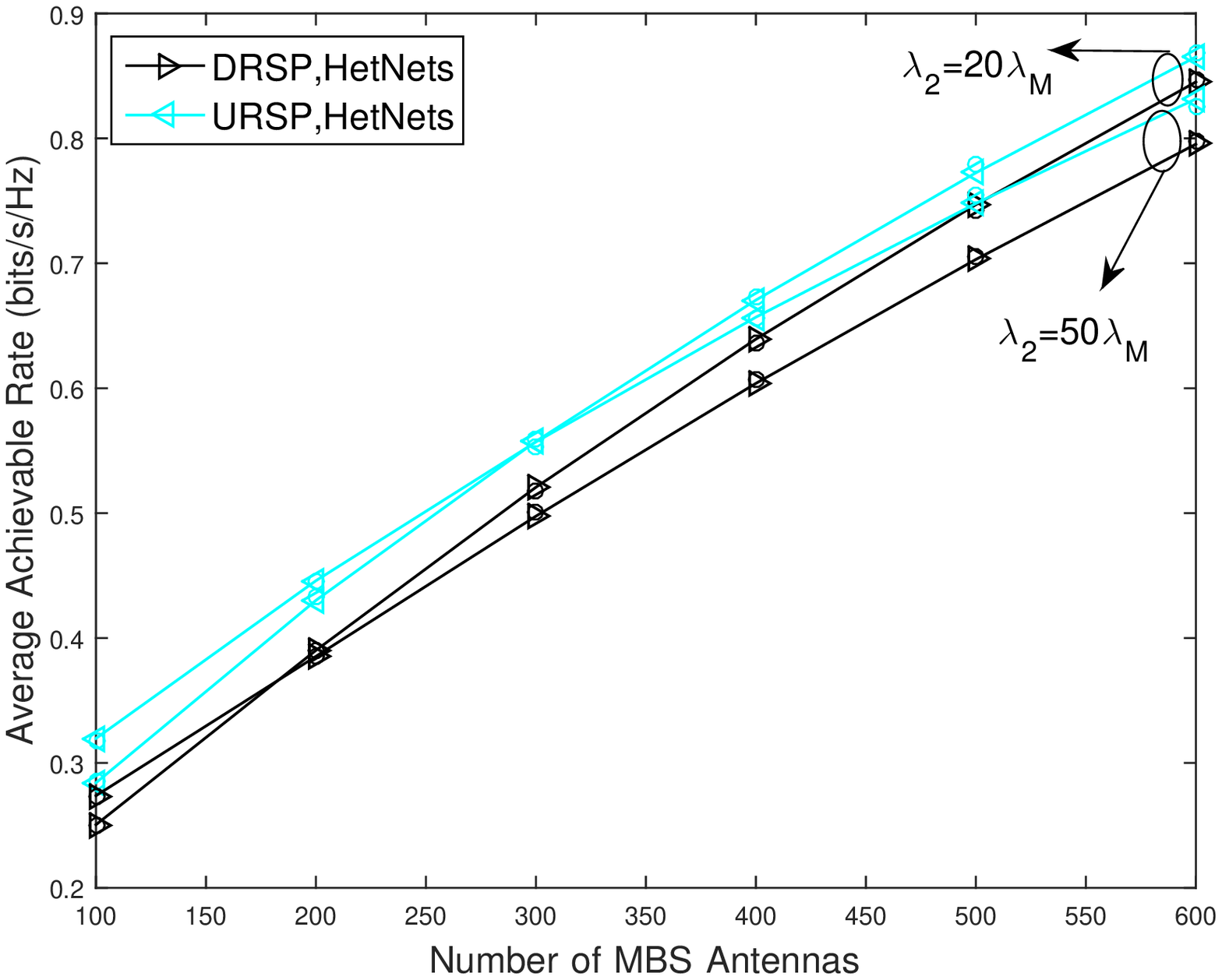}
\caption{The average  uplink achievable rate against the number of antennas in the massive MIMO HetNet.}\label{fig:UPLINK_hcn_den}
\end{figure}
\begin{figure}
  \centering
\includegraphics[height=3.5in,width=4.5in]{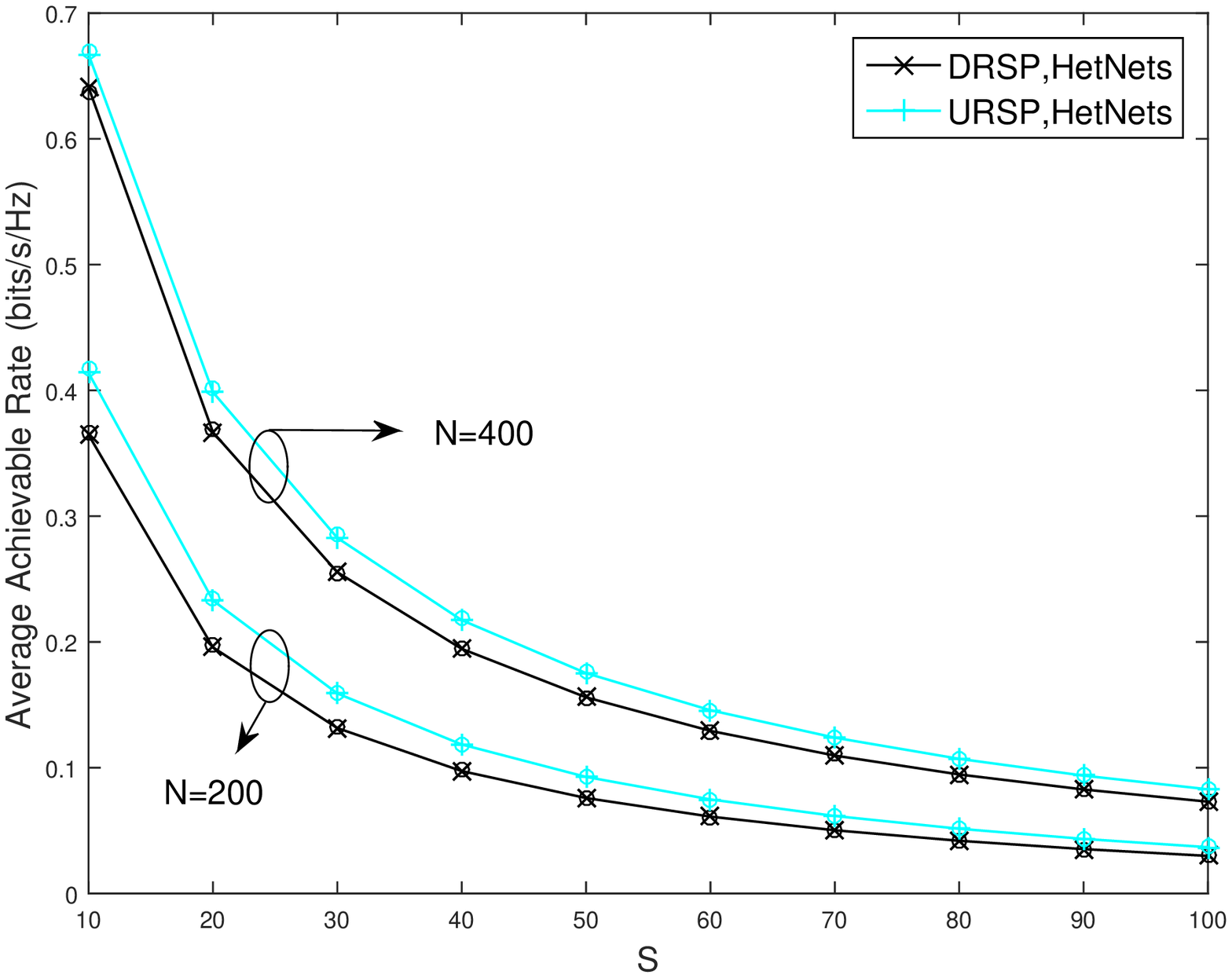}
   \caption{The average  uplink achievable rate against the number of users in the massive MIMO HetNet.}\label{fig:up-hcn-s2}
\end{figure}

Fig.~\ref{fig:UPLINK_mbs_den} shows the average uplink achievable rate of a user associated with the MBS versus the number of MBS antennas. The solid  curves are obtained from \eqref{aver_up_rate_DRSP} and its URSP-based counterpart. We observe that the average achievable rate increases with the number of MBS antennas, due to the increase of the power gain. For URSP-based user association, the average achievable rate also significantly increases with the density of PBSs. The reason is that when the PBSs become more dense, the distance between the user and the PBS is shorter and more users are associated with the PBS, and users with higher received power can be associated with the MBS. However, denser PBSs do not imply a bigger impact on the DRSP-based user association.

 {Fig.~\ref{fig:UPLINK_pbs_den} shows the average uplink achievable rate of a user associated with the PBS versus the number of MBS antennas. The solid  curves are obtained from \eqref{aver_rate_uplink} and its URSP-based counterpart. It is seen that the average achievable rate decreases with increasing the number of MBS antennas. The reason is that users in the macrocells harvest more energy and have higher transmit power, resulting in more severe interference to the uplink in the picocells. Different from the performance behavior  in the macrocells, DRSP-based user association actually outperforms the URDP-based strategy in the picocells.} {In addition, it is indicated from Figs. 7 and 8 that when the PBSs are dense and the number of MBS antennas is not very large, the uplink achievable rate in the picocell can be larger than that in the macrocell under DRSP-based user association.}

Fig.~\ref{fig:UPLINK_hcn_den} demonstrates the results for the average uplink achievable rate in the HetNet. The solid curves are obtained from \eqref{HCN_uplink_rate_drsp} and \eqref{HCN_uplink_rate_Ursp}. Results illustrate that the average rate increases with the number of MBS antennas. Nevertheless, without interference mitigation in the uplink, the deployment of more PBSs deteriorates the uplink performance, since more users are served and more uplink interference exists in the uplink WIT. More importantly, it is indicated that URSP-based user association can achieve better performance than the DRSP-based method, since it seeks to minimize the uplink path loss.  An interesting phenomenon is observed that there is a crossover point, beyond which deploying more PBSs deteriorates the uplink performance due to more uplink interference, {which indicates that in the massive MIMO HetNets with wireless energy harvesting, it can still be interference-limited in the uplink for the dense small cells case, and uplink interference management is needed.}

Finally, Fig.~\ref{fig:up-hcn-s2} shows the average  uplink achievable rate in the HetNet versus $S$. We see that URSP-based user association scheme outperforms the DRSP-based method, and increasing $S$ decreases the average rate, due to more uplink interference and lower harvested energy as suggested in Fig~\ref{fig:density_hcn}.

\section{Conclusions and Future Work}
{In this paper, we considered WPT and  WIT in the massive MIMO enabled HetNets. A stochastic geometry approach was adopted to model the $K$-tier HetNets where massive MIMO  were employed in the macrocells. By addressing the effect of massive MIMO on user association,  we analyzed two specific user association schemes, namely DRSP based scheme for maximizing the harvested energy and URSP based scheme for minimizing the uplink path loss. Based on these two user association schemes, we derived the expressions for the average harvested energy and average uplink rate, respectively. Our results have shown that the use of massive MIMO significantly increases the harvested energy and uplink  rate. When small cells go dense, it can be interference-limited in the uplink.  Although DRSP based user association  has more harvested energy, URSP based user association can achieve higher average uplink rate.}

{Areas that extend the line of this work include imperfect CSI case, and simultaneous wireless information and power transfer (SWIPT) in the downlink. Also, recalling that we have assumed that the number of active users served in each massive MIMO macrocell is a fixed value,  it would be of interest to evaluate the performance by considering the dynamic case. Moreover, it is shown that uplink interference can be severe for dense small cells, and interference management is needed.}

%{ Moreover, although the recent works~\cite{Lifeng_wang2015_globcom,Robert_heath_mmWave_energy} have begun to study mmWave energy harvesting. The performance difference for wireless power transfer in lower frequencies and mmWave frequencies is still unknown, and more research efforts are needed for understanding that.}

%Compared to the conventional HetNets,  more users are served  in the massive MIMO aided HetNets, which results in more uplink interference and deteriorates the uplink transmission. Therefore, it is crucial to propose some uplink interference mitigation schemes for performance enhancement, which could be our further work.

\section*{Appendix A: A proof of Lemma 1} \label{App:A}
\renewcommand{\theequation}{C.\arabic{equation}}
\setcounter{equation}{0}
Using DRSP-based user association in Section II-A, we first examine the power gain by using the proposed downlink power transfer design. As will be indicated by \eqref{B1} in Appendix B, the downlink received power gain is $G_a^{\mathrm{D}}=\left(N+S-1\right)$, which is different from the conventional massive MIMO networks without energy harvesting, due to the fact that the interference is identified as an RF energy source.

Using the similar approach suggested by \cite[Appendix A]{HS12}, we can then obtain the desired results \eqref{DRSP_PDF_M} and \eqref{Lemma-f-down-S}.

\section*{Appendix B: A proof of Theorem 1}\label{App:B}
\renewcommand{\theequation}{B.\arabic{equation}}
\setcounter{equation}{0}
Based on \eqref{MBS_RE_Power}, given $\left|X_{o,\mathrm{M}}\right|=x$, the average harvested energy for a typical user served by the MBS is written as
\begin{align}\label{B1}
&\widetilde{\mathrm{E}}_{o,\mathrm{M}}^{\mathrm{DRSP}}\left(x\right)\notag\\
&=\mathbb{E}\left\{\mathrm{E}_{o,\mathrm{M}}^1\right\}+\mathbb{E}\left\{\mathrm{E}_{o,\mathrm{M}}^2\right\}+\mathbb{E}\left\{\mathrm{E}_{o,\mathrm{M}}^3\right\}\nonumber\\
&=\eta{{{P_\mathrm{M}}}}\left(\mathbb{E}\{h_o\}+\mathbb{E}\{h'_o\}\left(S-1\right)\right)\frac{\tau T}{S}L\left( \max\left\{x,d\right\}  \right) \nonumber\\ &\quad\quad+\mathbb{E}\left\{\mathrm{E}_{o,\mathrm{M}}^3\right\}\nonumber \\
 &=\eta \left({N+S-1}\right) {{\frac{P_\mathrm{M}}{S}}}\beta\notag\\
 &\quad\quad\times\left({\rm{\mathbf{1}}}\left( x \leq d \right) d^{-\alpha_\mathrm{M}}+{\rm{\mathbf{1}}}\left( x > d \right) x^{-\alpha_\mathrm{M}}\right){\tau T}\nonumber\\
 &\quad\quad+\mathbb{E}\left\{\mathrm{E}_{o,\mathrm{M}}^3\right\},
\end{align}
where $\mathbb{E}\left\{\mathrm{E}_{o,\mathrm{M}}^3\right\}$ denotes the average harvested energy from the ambient RF, and is expressed as
\begin{align}\label{B2}
\mathbb{E}\left\{\mathrm{E}_{o,\mathrm{M}}^3\right\}=\eta \left(\mathbb{E}\{I_{\mathrm{M},1}\}+\mathbb{E}\{I_{\mathrm{S},1}\}\right)\times\tau T.
\end{align}
Here, $\mathbb{E}\{I_{\mathrm{M},1}\}$ is the average power harvested from the intra-tier interference, which is given by
\begin{align}\label{B3}
&\mathbb{E}\{I_{\mathrm{M},1}\}\notag\\
&=\mathbb{E}\left\{\sum\limits_{\ell  \in {\Phi_\mathrm{M}}\setminus\left\{ o \right\}} {{P_\mathrm{M}}{h_\ell }L\left(\max \left\{ {\left| {{X_{\ell ,\mathrm{M}}}} \right|},d\right\}\right)}\right\}\nonumber\\
&={P_\mathrm{M}} \mathbb{E} \left\{\sum\limits_{\ell  \in {\Phi_\mathrm{M}}\setminus\left\{ o \right\}} \mathbb{E}\{h_\ell\} L\left(\max \left\{ {\left| {{X_{\ell ,\mathrm{M}}}} \right|},d\right\}\right)\right\}\nonumber\\
&\mathop  = \limits^{\left(a\right)} P_\mathrm{M} \beta 2 \pi {\lambda _{\mathrm{M}}} \left(\int_x^\infty \left(\max \left\{r,d\right\}\right)^{-\alpha_\mathrm{M}} r dr \right)\nonumber\\
&=  P_\mathrm{M} \beta 2 \pi {\lambda _{\mathrm{M}}} \left({\rm{\mathbf{1}}}\left( x \leq d \right) \left({d^{ - {\alpha _{\mathrm{M}}}}} \frac{{({d^2} - {x^2})}}{2} - \frac{{{d^{2 - {\alpha _{\mathrm{M}}}}}}}{{2 - {\alpha _{\mathrm{M}}}}}\right) \right.\nonumber\\
&\quad\quad\quad\quad\quad\quad\quad\quad  \left.-{\rm{\mathbf{1}}}\left( x > d \right) \frac{{{x^{2 - {\alpha _{\mathrm{M}}}}}}}{{2 - {\alpha _{\mathrm{M}}}}} \right),
\end{align}
where $\left(a\right)$ results from $\mathbb{E}\{h_\ell\}=1$ and the Campbell's theorem~\cite{Baccelli2009}. \footnote{{The Campbell's theorem is~\cite{Baccelli2009}: For a Poisson point process $\Phi$ with density $\lambda$, we have $\mathbb{E}\left\{ \sum\limits_{{x_i} \in \Phi } {f\left( {{x_i}} \right)} \right\}=\lambda \int\limits_{{\mathbb{R}^{\dim }}} {\mathbb{E}\left\{ {f\left( x \right)} \right\}dx}$.}} Similarly,  $\mathbb{E}\{I_{\mathrm{S},1}\}$ is the average power harvested from the inter-tier interference, which is given by
\begin{align}\label{B4}
&\mathbb{E}\{I_{\mathrm{S},1}\}\notag\\
&=\mathbb{E}\left\{\sum\limits_{i = 2}^K {\sum\limits_{j \in {\Phi _i}} {{P_i}{h_j}L\left( \max \left\{ {\left| {{X_{j,i}}} \right|},d\right\}\right)} } \right\} \notag\\
&=\sum\limits_{i = 2}^K P_i \beta 2 \pi {\lambda _{i}} \left(\int_{{\hat r}_{{\mathrm{M}}\mathrm{S}}{x^{{\alpha _{\mathrm{M}}}/{\alpha _i}}}}^\infty \left(\max \left\{r,d\right\}\right)^{-\alpha_i} r dr \right)\nonumber\\
&= \sum\limits_{i = 2}^K P_i \beta 2 \pi {\lambda _{i}}\Bigg[{\rm{\mathbf{1}}}\left( x \leq d_o \right)\notag\\
&\quad\quad\quad\quad\quad\quad~\times\left( {d^{- {\alpha _i}}}\frac{{\left({d^2} - {{\hat r}_{{\mathrm{M}}\mathrm{S}}}^2 {x^{\frac{2{\alpha _{\mathrm{M}}}}{\alpha _i}}}\right)}}{2} - \frac{{{d^{2 - {\alpha _i}}}}}{{2 - {\alpha _i}}} \right)\nonumber\\
&\quad\quad\quad\quad\quad\quad\left.-{\rm{\mathbf{1}}}\left( x > d_o \right) \frac{{{\hat r}_{{\mathrm{M}}\mathrm{S}}}^{\left(2-\alpha _i\right)} {x^{\frac{{\alpha _{\mathrm{M}}}(2-\alpha _i) }{\alpha _i}}}}{2-\alpha _i} \right],
\end{align}
in which $d_o=\left({{\hat r}_{{\mathrm{M}}\mathrm{S}}}\right)^{-\frac{\alpha _i}{\alpha_{\mathrm{M}}}}d^{{\alpha _i}/{\alpha _{\mathrm{M}}}}$. By substituting \eqref{B3} and \eqref{B4} into \eqref{B1}, we then obtain \eqref{DRSP_con_AverEH_M}.

We next derive the average harvested energy for a typical user served by the SBS in the $k$-th tier under a given distance
$ \left|X_{o,k}\right|=y$, which is given by
\begin{align}\label{B5}
&\widetilde{\mathrm{E}}_{o,k}^{\mathrm{DRSP}}\left(y\right)\notag\\
&=\mathbb{E}\left\{\mathrm{E}_{o,k}^1\right\} + \mathbb{E}\left\{\mathrm{E}_{o,k}^2\right\}\nonumber \\
&=\eta{{{P_k}}}L\left( \max\left\{y,d\right\}\right)\times\tau T+\eta \left(\mathbb{E} \left\{I_{\mathrm{M},k}\right\}+\mathbb{E} \left\{I_{\mathrm{S},k}\right\}\right)\times\tau T,
\end{align}
where $\mathbb{E} \left\{I_{\mathrm{M},k}\right\}$ is calculated as
\begin{align}\label{B6}
&\mathbb{E} \left\{I_{\mathrm{M},k}\right\}\notag\\
&=\mathbb{E} \left\{\sum\limits_{\ell  \in {\Phi_\mathrm{M}}} {{P_\mathrm{M}}{g_\ell }L\left(\max \left\{ {\left| {{X_{\ell,\mathrm{M}}}} \right|},d\right\}\right)}\right\}\nonumber\\
&=P_\mathrm{M} \beta 2 \pi {\lambda _{\mathrm{M}}} \left(\int_{{{\hat r}_{\mathrm{SM}}}{{y^{{\alpha _k}/{\alpha _{\mathrm{M}}}}}}}^\infty \left(\max \left\{r,d\right\}\right)^{-\alpha_\mathrm{M}} r dr \right)\nonumber\\
&=P_\mathrm{M} \beta 2 \pi {\lambda _{\mathrm{M}}} \Bigg[{\rm{\mathbf{1}}}\left( y \leq d_1 \right)\notag\\
&\quad\quad\quad\quad\quad\times\left({d^{ - {\alpha _{\rm M}}}}{\frac{{\left({d^2} - {{\hat r}_{\mathrm{SM}}^2}{{y^{\frac{2\alpha _k}{\alpha _{\mathrm{M}}}}}}\right)}}{2} - \frac{{{d^{2 - {\alpha _{\rm M}}}}}}{{2 - {\alpha _{\rm M}}}}}\right)\nonumber\\
&\quad\quad\quad\quad\quad\left.- {\rm{\mathbf{1}}}\left( y > d_1 \right)  \frac{{\hat r}_{\mathrm{SM}}^{2-\alpha _{\rm M}} y^{\frac{\alpha _k\left(2-\alpha _{\rm M}\right)}{\alpha _{\mathrm{M}}}} }{{2 - {\alpha _{\rm M}}}} \right],
\end{align}
where $d_1={\left({\hat r}_{\mathrm{SM}}\right)^{\frac{-\alpha _{\rm M}}{\alpha _k}}}{d^{{\alpha _{\rm M}}/{\alpha _k}}}$, and  $\mathbb{E} \left\{I_{\mathrm{S},k}\right\}$ is given by
\begin{align}\label{B7}
&\mathbb{E} \left\{I_{\mathrm{S},k}\right\}\notag\\
&=\mathbb{E} \left\{\sum\limits_{i = 2}^K {\sum\limits_{j \in {\Phi _i}\setminus \left\{o\right\}} {{P_i}{g_{j,i}}L\left( \max \left\{ {\left| {{X_{j,i}}} \right|},d\right\}\right)} }\right\} \nonumber\\
&=\sum\limits_{i = 2}^K  \beta 2 \pi {\lambda_i} \int_{{{\hat r}_{\mathrm{SS}}}{y^{\frac{\alpha _k}{\alpha _i}}}}^\infty \left(\max \left\{r,d\right\}\right)^{-\alpha_i} r dr \nonumber\\
&=\sum\limits_{i = 2}^K  \beta 2 \pi {\lambda_i} \Bigg[{\rm{\mathbf{1}}}\left( y \leq d_2 \right)\notag\\
& \quad\quad\quad\quad\quad\times\left({{d^{ - {\alpha _i}}} \frac{{\left({d^2} - {{\hat r}_{\mathrm{SS}}^2}y^{\frac{2 \alpha _k}{\alpha _i}} \right)}}{2} - \frac{{{d^{2 - {\alpha _i}}}}}{{2 - {\alpha _i}}}}  \right)\nonumber\\
&\quad\quad\quad\quad\quad\left.- {\rm{\mathbf{1}}}\left( y > d_2 \right) \frac{{{\hat r}_{\mathrm{SS}}}^{2 - {\alpha _i}}y^{\frac{ \alpha _k \left(2 - {\alpha _i}\right)}{\alpha _i}}  }{{2 - {\alpha _i}}}  \right],
\end{align}
where  $d_2={\left({{\hat r}_{\mathrm{SS}}}\right)^{\frac{-\alpha _i}{\alpha _k}}}{d^{{\alpha _i}/{\alpha _{k}}}}$. By plugging \eqref{B6} and \eqref{B7} into \eqref{B5}, we obtain the desired result in \eqref{DRSP_con_AverEH_k}.

\section*{Appendix C: A proof of Corollary 3}\label{App:C}
\renewcommand{\theequation}{C.\arabic{equation}}
\setcounter{equation}{0}
According to \eqref{DRSP_con_AverEH_M} and \eqref{DRSP_AverEH_M}, we first are required to derive the following asymptotic expressions:
\begin{subequations}
\begin{align}\label{C1}
{\Xi _1}\left( x \right)&=\int_{\rm{0}}^x {f_{\left| {{X_{o,{\rm{M}}}}} \right|}^{{\rm{DRSP}}}(r)dr} ,\\
{\Xi _2}\left( {a{\rm{,}}b} \right)&=\int_a^\infty  {{x^b}f_{\left| {{X_{o,{\rm{M}}}}} \right|}^{{\rm{DRSP}}}(x)dx}, \\
{\Xi _3}\left( {c{\rm{,}}d} \right)&=\int_0^c {{x^d}f_{\left| {{X_{o,{\rm{M}}}}} \right|}^{{\rm{DRSP}}}(x)dx}.
\end{align}
\end{subequations}
By using the Taylor series expansion truncated to the first order  as $N \rightarrow \infty$, (C.1a) is asymptotically computed as
\begin{multline}\label{C2}
{\Xi _1}\left( x \right)= \frac{2\pi {\lambda _{\rm{M}}}}{{\Psi _{{{\rm{M}}_\infty }}^{{\rm{DRSP}}}}}\left[ {\int_0^x {r\exp \left( { - \pi {\lambda _{\rm{M}}}{r^2}} \right)dr} } \right.\\
\left. - \pi \sum\limits_{i = 2}^K {{\lambda _i}} {\hat r_{{\rm{MS}}}^2\int_0^x {{r^{1 + \frac{2{\alpha _{\rm{M}}}}{{\alpha _i}}}}\exp \left( { - \pi {\lambda _{\rm{M}}}{r^2}} \right)dr} } \right].
\end{multline}
It is noted that the asymptotic expression for the probability of a typical user that is associated with the MBS has been derived in \eqref{M_Pr_asym}. Therefore, we can directly apply the result in \eqref{C2}. After some mathematical manipulations, we obtain \eqref{asym_CDF_exp_M}.  Similarly, the asymptotic expressions for (C.1b) and (C.1c) are correspondingly derived as \eqref{E_a_b_function} and \eqref{E_a_b2_function}. Substituting \eqref{asym_CDF_exp_M}--\eqref{E_a_b2_function} into \eqref{DRSP_AverEH_M}, we obtain the desired result in \eqref{DRSP_AverEH_M_asymptotic}.

\section*{Appendix D: A proof of Theorem 2}\label{App:D}
\renewcommand{\theequation}{D.\arabic{equation}}
\setcounter{equation}{0}
The exact average achievable rate is written as
\begin{equation}
R=\frac{\left(1-\tau\right)T}{T}\mathbb{E}\left\{\log_2\left(1+{\rm SINR}\right)\right\}.
\end{equation}
Now, using Jensen's inequality, we can obtain the lower bound for the conditional average uplink achievable rate between a typical user and its serving MBS as
\begin{align}\label{D1}
 R_{\mathrm{DRSP},\mathrm{M}}^\mathrm{low}\left(x\right)=\left(1-\tau\right)\log_2
 \left(1+\frac{1}{\mathbb{\mathrm{E}}\left\{\mathrm{SINR}_\mathrm{M}^{-1}\right\}}\right).
\end{align}
Based on \eqref{SINR_Macro}, ${\mathbb{E}\left\{\mathrm{SINR}_\mathrm{M}^{-1}\right\}}$ is calculated as
\begin{align}\label{D2}
&{\mathbb{E}\left\{\mathrm{SINR}_\mathrm{M}^{-1}\right\}}\notag\\
&= \mathbb{E}\left\{\frac{{I_{u,\mathrm{M}}} + {I_{u,\mathrm{S}}} + {\delta ^2}}{{P_{{u_\mathrm{M}}}^{\mathrm{DRSP}}}{h_{o,\mathrm{M}}}L\left( \max\left\{x,d\right\} \right)}\right\}\nonumber\\
&\overset{(a)}{\approx}\left({{P_{{u_\mathrm{M}}}^{\mathrm{DRSP}}}{\left(N-S+1\right)}L\left(\max\left\{x,d\right\} \right)}\right)^{-1}  \nonumber\\
&\quad\quad\times\left(\mathbb{E}\left\{I_{u,\mathrm{M}}\right\}+\mathbb{E}\left\{I_{u,\mathrm{S}}\right\}+\delta^2\right),
\end{align}
where (a) is obtained by using the law of large numbers, i.e., $h_{o,\mathrm{M}} \approx N-S+1 $ as $N$ becomes large. Using the Campbell's theorem~\cite{Baccelli2009}, we next derive $\mathbb{E}\left\{I_{u,\mathrm{M}}\right\}$ as
\begin{align}\label{D3}
&\mathbb{E}\left\{I_{u,\mathrm{M}}\right\}\notag\\
&= \mathbb{E}\left\{\sum\limits_{i \in {{\widetilde {\mathcal{U}}}_\mathrm{M}} \setminus \left\{ o \right\}} {{P_{{u_\mathrm{M}}}^{\mathrm{DRSP}}}{h_i}L\left( {\max \left\{ {\left| {{X_i}} \right|,d} \right\}} \right)}\right\}\nonumber\\
&={P_{{u_\mathrm{M}}}^{\mathrm{DRSP}}} \beta 2 \pi (S\lambda_\mathrm{M}) \left(\int_0^d {{d^{ - {\alpha_\mathrm{M}}}}r} dr + \int_d^\infty  {{r^{ - {\alpha_\mathrm{M}}}}} r dr\right)\nonumber\\
&={P_{{u_\mathrm{M}}}^{\mathrm{DRSP}}} \beta 2 \pi (S\lambda_\mathrm{M})\left(\frac{{{d^{2 - {\alpha _{\rm M}}}}}}{2} + \frac{{{d^{2 - {\alpha _{\rm M}}}}}}{{{\alpha _{\rm M}}-2 }}\right).
\end{align}
% \mathop  = \limits^{\left(b\right)}
Likewise, $\mathbb{E}\left\{I_{u,\mathrm{S}}\right\}$ is derived as
\begin{align}\label{D4}
&\mathbb{E}\left\{I_{u,\mathrm{S}}\right\}\notag\\
&=\mathbb{E}\left\{\sum\limits_{i = 2}^K {\sum\limits_{j \in {{\widetilde {\mathcal{U}}}_i}} {{P_{{u_i}}^{\mathrm{DRSP}}}{h_j}L\left( {\max \left\{ {\left| {{X_j}} \right|,d} \right\}} \right)} }\right\}\nonumber\\
&= \sum\limits_{i = 2}^K {P_{{u_i}}^{\mathrm{DRSP}}}\beta 2 \pi \lambda_i \left(\frac{{{d^{2 - {\alpha _{\rm M}}}}}}{2} + \frac{{{d^{2 - {\alpha_{\rm M}}}}}}{{{\alpha _{\rm M}}-2 }}\right).
\end{align}
Substituting  \eqref{D2}--\eqref{D4} into \eqref{D1}, we obtain \eqref{theo_2}.

\section*{Appendix E: A proof of Theorem 3}\label{App:E}
\renewcommand{\theequation}{E.\arabic{equation}}
\setcounter{equation}{0}
Given a distance $\left|X_{o,k}\right|=y$, the conditional average uplink achievable rate for a typical user served by the SBS in the $k$-th tier is expressed as
\begin{align}\label{E1}
R_{\mathrm{DRSP},k}\left(y\right)&=\frac{\left(1-\tau\right)T}{T}\mathbb{E}\left\{\log_2\left(1+\mathrm{SINR}_k\right)\right\} \nonumber\\
&=\frac{\left(1-\tau\right)}{\ln 2}\int_0^\infty \frac{{\bar{F}}_{\mathrm{SINR}}\left(x\right)}{1+x} dx,
\end{align}
where ${\bar{F}}_{\mathrm{SINR}_k}\left(x\right)$ is the CCDF of the received SINR, denoted by $\mathrm{SINR}_k$, and is given by
\begin{align}\label{E2}
&{\bar{F}}_{\mathrm{SINR}}\left(x\right)\notag\\
&=\Pr\left(\mathrm{SINR}_k>x\right)\nonumber\\
&=\Pr\left(\frac{{{P_{u_k}^{\mathrm{DRSP}}}{g_{o,k}}L\left( {y,d} \right)}}{{{I_{u,\mathrm{M}}} + {I_{u,\mathrm{S}}} + {\delta ^2}}}>x\right)\nonumber\\
&=e^{-\frac{x {\delta ^2}}{P_{u_k}^{\mathrm{DRSP}}\Delta_2\left(y\right)}}\mathbb{E}\left\{e^{-\frac{x {I_{u,\mathrm{M}}}}{P_{u_k}^{\mathrm{DRSP}}\Delta_2\left(y\right)}}\right\} \mathbb{E}\left\{e^{-\frac{x {I_{u,\mathrm{S}}}}{P_{u_k}^{\mathrm{DRSP}}\Delta_2\left(y\right)}}\right\}\nonumber\\
&=e^{-\frac{x {\delta ^2}}{P_{u_k}^{\mathrm{DRSP}}\Delta_2\left(y\right)}}\times\notag\\
&\quad\quad\quad{{\cal{L}}_{{I_{u,\mathrm{M}}}}}\left(\frac{x }{P_{u_k}^{\mathrm{DRSP}}\Delta_2\left(y\right)}\right){{\cal{L}}_{{I_{u,\mathrm{S}}}}}\left(\frac{x }{P_{u_k}^{\mathrm{DRSP}}\Delta_2\left(y\right)}\right),
\end{align}
where $\Delta_2\left(y\right)=L\left( \max\left\{y,d\right\} \right)$, ${{\cal{L}}_{{I_{u,\mathrm{M}}}}}\left(\cdot\right)$ and ${{\cal{L}}_{{I_{u,\mathrm{S}}}}}\left(\cdot\right)$ are the Laplace transforms of the PDFs of $I_{u,\mathrm{M}}$ and $I_{u,\mathrm{S}}$, respectively. {Considering the fact that users are densely served in the massive MIMO HetNets, the minimum distance between the typical BS and the interfering users is small, the Laplace transform of the PDF of $I_{u,\mathrm{M}}$ can be approximately derived as~\cite{XinqinLin_2014}}
\begin{align}\label{E3}
&{{\cal{L}}_{{I_{u,\mathrm{M}}}}}\left(s\right)\notag\\
&=\mathbb{E}\left\{\exp\left(-s\sum\limits_{i \in {{\widetilde {\mathcal{U}}}_\mathrm{M}} } {{P_{u_\mathrm{M}}^{\mathrm{DRSP}}}{g_i}L\left( {\max \left\{ {\left| {{X_i}} \right|,d} \right\}} \right)}\right)\right\}\nonumber\\
&\mathop  \approx \limits^{\left(a\right)} \exp\left(-2\pi(S\lambda_\mathrm{M})\int_0^\infty \frac{s P_{u_\mathrm{M}}^{\mathrm{DRSP}}L\left( {\max \left\{ {r,d} \right\}} \right)}{1+s P_{u_\mathrm{M}}^{\mathrm{DRSP}}L\left( {\max \left\{ {r,d} \right\}} \right)} r dr \right)\nonumber\\
&= \exp\left(- \pi (S\lambda_\mathrm{M}) \frac{s P_{u_\mathrm{M}}^{\mathrm{DRSP}}\beta d^{-\alpha_i}}{1+s P_{u_\mathrm{M}}^{\mathrm{DRSP}} \beta d^{-\alpha_i}} d^2\right.\notag\\
&\quad\quad -2\pi(S \lambda_\mathrm{M}) s P_{u_\mathrm{M}}^{\mathrm{DRSP}} \beta\times\nonumber\\
&\quad\quad~~ \left.  \frac{ d^{2-\alpha_i}}{\alpha_i-2} {}_2{F_1}
\left[1,\frac{{{\alpha _i} - 2}}{{{\alpha _i}}};2 - \frac{2}{{{\alpha _i}}}; - s P_{u_\mathrm{M}}^{\mathrm{DRSP}}\beta {{d}^{ - {\alpha _i}}} \right]   \right),
\end{align}
where (a) is obtained by using the generating functional of PPP~\cite{M_Haenggi2013}. Similarly, ${{\cal{L}}_{{I_{u,\mathrm{S}}}}}\left(s\right)$ is given by
\begin{align}\label{E4}
&{{\cal{L}}_{{I_{u,\mathrm{S}}}}}\left(s\right)\notag\\
%=\mathbb{\mathrm{E}}\left\{\exp\bigg(-s\sum\limits_{i = 2}^K {\sum\limits_{j \in {{\widetilde {\mathcal{U}}}_i}{ \setminus \left\{ o \right\}}} {{P_{{u_i}}^{\mathrm{DRSP}}}{g_j}L\left( {\max \left\{ {\left| {{X_j}} \right|,d} \right\}} \right)} } \bigg)\right\}\nonumber\\
&\approx \exp\left(-\sum\limits_{i = 2}^K \pi \lambda_i  \frac{s P_{u_i}^{\mathrm{DRSP}}\beta d^{-\alpha_i}}{1+s P_{u_i}^{\mathrm{DRSP}} \beta d^{-\alpha_i}} d^2\right.\notag\\
&\quad\quad -\sum\limits_{i = 2}^K 2\pi\lambda_i s P_{u_i}^{\mathrm{DRSP}} \beta \frac{ d^{2-\alpha_i}}{\alpha_i-2}\times \nonumber\\
&\quad\quad~~\left.  {}_2{F_1}
\left[1,\frac{{{\alpha _i} - 2}}{{{\alpha _i}}};2 - \frac{2}{{{\alpha _i}}}; - s P_{u_i}^{\mathrm{DRSP}}\beta {{d}^{ - {\alpha _i}}} \right]   \right).
\end{align}
Substituting \eqref{E3} and \eqref{E4} into \eqref{E2}, we get \eqref{CCDF_tier_k}.

%\bibliographystyle{IEEEtran}
%\bibliography{mybib}

\end{document}